\def\BibTeX{{\rm B\kern-.05em{\sc i\kern-.025em b}\kern-.08em
    T\kern-.1667em\lower.7ex\hbox{E}\kern-.125emX}}
\pgfplotsset{compat=1.17}
\newtheorem{corollary}{Corollary}[section]
\newtheorem{lemma}{Lemma}[section]
\newtheorem{proposition}{Proposition}[section]
\newdimen\LineSpace
\tikzset{
    line space/.code={\LineSpace=#1},
    line space=3pt
}
\tikzset{cross/.style={cross out, draw, 
         minimum size=2*(#1-\pgflinewidth), 
         inner sep=0pt, outer sep=0pt}}
\begin{document}

\title{Optimal Antenna Placement for Two-Antenna
Near-Field Wireless Power Transfer
\thanks{The work was supported in part by the German Research Foundation through Project SFB 1483.}
}

\author{
\IEEEauthorblockN{Kenneth MacSporran Mayer, Laura Cottatellucci, and Robert Schober}
\IEEEauthorblockA{Friedrich-Alexander University Erlangen-Nuremberg, Germany}
}

\maketitle

\begin{abstract}
Current trends in communication system design precipitate a change in the operating regime from the traditional far-field to the radiating near-field (Fresnel) region. We investigate the optimal transmit antenna placement for a multiple-input single-output (MISO) wireless power transfer (WPT) system designed for a three-dimensional cuboid room under line-of-sight (LoS) conditions in the Fresnel region. We formulate an optimisation problem for maximising the received power at the worst possible receiver location by considering the spherical nature of the electromagnetic (EM) wavefronts in the Fresnel region while assuming perfect knowledge of the channel at the transmitter. For the case of two transmit antennas, we derive a closed-form expression for the optimal positioning of the antennas which is purely determined by the geometry of the environment. If the room contains locations where the far-field approximation holds, the proposed positioning is shown to reduce to the far-field solution. The analytical solution is validated through simulation. Furthermore, the maximum received power at the locations yielding the worst performance is quantified and the power gain over the optimal far-field solution is presented. For the considered cuboid environment, we show that a distributed antenna system is optimal in the Fresnel region, whereas a co-located antenna architecture is ideal for the far-field.  
\end{abstract}

\section{Introduction}\label{Section: Introduction}
Future communication systems are moving towards operating at higher frequencies, as the associated large bandwidth helps support the ever-increasing requirements on data rate, low latency, network heterogeneity, as well as energy efficiency. 
This precipitates a change in the operating regime from the traditional far-field to the radiating near-field (Fresnel) region, which must be reflected in the modelling of the wireless channel \cite{Zhang2022a}.
Wireless systems designed for the Fresnel region are of interest not only for communications \cite{Zhang2021}, but also for wireless power transfer (WPT) \cite{Zhanga}.
WPT systems designed for the Fresnel region are capable of focusing the energy beams for transferring power wirelessly, thus yielding a larger amount of received power and causing less energy pollution in unwanted directions compared to traditional far-field approaches \cite{Zhang2022a}, \cite{Zhanga}. 

A key difference between the far-field and the radiating near-field is how the electromagnetic (EM) wavefronts are modelled. While a planar wavefront model (PWM) is suitable for the far-field, considering the spherical nature of the EM wavefronts is indispensable in the Fresnel region \cite{Zhang2022a}. 
Therefore, the design of wireless systems for the radiating near-field is driven by the spherical wavefront model (SWM) \cite{Zhang2022a, Zhang2021, Zhanga}.
The SWM has been considered for WPT in \cite{Zhang} where a dynamic metasurface antenna is employed to maximise the weighted sum of received energies via beam focusing.

While the study of WPT in the Fresnel region is at an early stage, far-field WPT has been extensively studied. A comprehensive overview of the design concepts, capabilities and limitations, prototypes, and applications of WPT systems is provided in \cite{Massa2013,6810996, Costanzo2016SmartSI,Clerckx, Clerckx2018, Clerckx2022, Gu2021}. 
Typically, WPT systems are required to have line-of-sight (LoS) between transmitter and receiver in order to attain adequate power transfer efficiency \cite{Clerckx2022}.
Additionally, in state-of-the-art WPT systems, the transmit antennas of the energy transmitter are either co-located or distributed \cite{Clerckx2022}.
Co-located antenna architectures have been employed, for example, in \cite{6954434} and distributed antenna systems (DASs) have been considered in \cite{ Choi2018, Shen2021}.
DASs are an appealing transmit antenna architecture for WPT systems and enable cooperation among the distributed transmit antennas. Compared to co-located energy transmitters, employing a DAS results in a more uniform distribution of the received power in the environment \cite{Clerckx2022}. DASs for WPT have been investigated for different transmission strategies such as transmit antenna selection \cite{Shen2021} and maximum ratio transmission \cite{Choi2018}. 

In this paper, we consider a multiple-input single-output (MISO) WPT system that comprises a two-antenna energy transmitter and a single antenna energy receiver in LoS conditions. The energy receiver is located at an arbitrary position in a three-dimensional cuboid room and the transmit antennas are located on one of the room's walls.
We determine the amplitude variations of the wireless channel based on geometrical considerations and optimise the positions of the transmit antennas analytically, thereby showing whether a co-located or a distributed transmit antenna architecture is optimal. 
Hereby, the objective is to maximise the received power for the worst possible receiver position. 
Thus, the proposed solution ensures provision of the maximum, worst-case power for the given environment
when there is no prior knowledge on the receiver's location in the room. This is crucial for wirelessly powered devices utilised for continuous monitoring, such as sensors and wearables. 

Our approach for optimising the transmit antenna positions of a WPT system is designed for a three-dimensional cuboid room and relies on exploiting the spherical nature of the EM wavefronts.
Therefore, the existing results \cite{Massa2013,6810996, Costanzo2016SmartSI,Clerckx, Clerckx2018, Clerckx2022, Gu2021, 6954434, Choi2018, Shen2021} which were obtained for WPT systems designed for the far-field operating regime are not applicable to the problem considered in this paper. 
The antenna placement problem is formulated for a general number of transmit antennas and we solve the problem optimally for a two-antenna system. 
The analytical expression describing the optimal transmit antenna positions only depends on the geometry of the environment. 
While the proposed solution is designed around capturing the effects of the spherical EM wavefronts in the Fresnel region, the solution is shown to converge to the optimal far-field solution once the distances are large enough for the PWM to become sufficiently accurate at the worst possible receiver positions. 
The proposed optimal transmit antenna deployment reveals that a DAS is the optimal transmit antenna architecture in the Fresnel region, whereas a co-located architecture is optimal in the far-field. 
The point of transition between the DAS and the co-located transmit antenna architecture is determined through geometrical considerations.
Moreover, the extension of the presented methodology to systems with more than two transmit antennas is briefly discussed.
The analytical solution is validated by solving the problem numerically. Furthermore, the power gain over the far-field solution is investigated which shows that, in the Fresnel region, deploying a DAS results in up to three times as much received energy at the receiver compared to a co-located architecture.

\section{System Model}\label{Section: System Model}
The WPT system considered in this paper comprises a transmitter equipped with $N_t$ antennas and a single-antenna receiver endowed with an energy harvester. 
An illustration of the environment is provided in Fig. \ref{fig: System Model Overview} for $N_t=2$. 
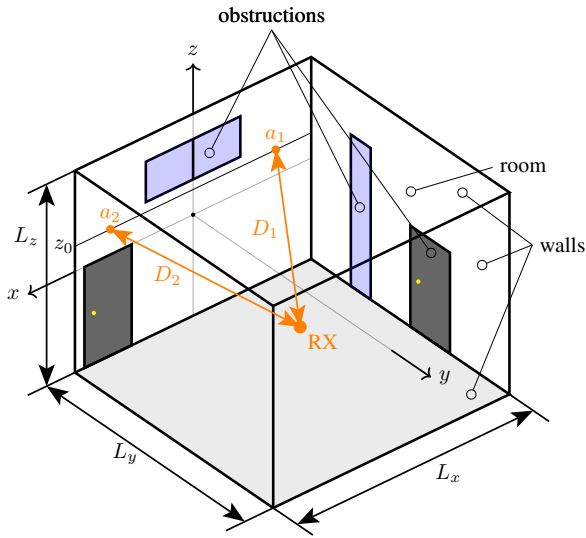
\begin{figure}
    \centering
    \scalebox{0.82}{

\definecolor{FAU_BLAU}{RGB}{0,32,96}
\tdplotsetmaincoords{55}{140}
\begin{tikzpicture}[
tdplot_main_coords,
cube/.style={very thick,black},
cube_continued/.style={thick,black},
indicator/.style={<->,thick,arrows = {Stealth[length=10pt, inset=2pt]-Stealth[length=10pt, inset=2pt]},thick,black},
grid/.style={very thin,gray},
axis_outside/.style={->,black,thick},
axis_inside/.style={very thin,opacity=0.33}
]
\draw[axis_inside] (0,0,2) -- (5,0,2) ;
\draw[axis_outside] (5,0,2) -- (6,0,2) node[anchor=east]{$x$};
\draw[axis_inside] (2.5,0,2) -- (2.5,5,2) ;
\draw[axis_outside] (2.5,5,2) -- (2.5,6,2) node[anchor=west]{$y$};
\draw[axis_inside] (2.5,0,0) -- (2.5,0,4) ;
\draw[axis_outside] (2.5,0,4) -- (2.5,0,5) node[anchor=south]{$z$};

\draw[indicator] (5.6,0,0) -- (5.6,5,0) node[anchor=east,midway]{$L_y$} ;
\draw[cube_continued] (5,0,0) -- (6,0,0) ;
\draw[cube_continued] (5,5,0) -- (6,5,0) ;

\draw[indicator] (0,5.6,0) -- (5,5.6,0) node[anchor=west,midway,xshift=2mm,yshift=-1mm]{$L_x$} ;
\draw[cube_continued] (5,5,0) -- (5,6,0) ;
\draw[cube_continued] (0,5,0) -- (0,6,0) ;

\draw[indicator] (5.6,0,0) -- (5.6,0,4) node[anchor=east,midway,yshift=8mm]{$L_z$} ;
\draw[cube_continued] (5,0,4) -- (6,0,4) ;

\draw[cube] (0,0,0) -- (0,5,0) -- (5,5,0) -- (5,0,0) -- cycle;
\fill[brown,cube,opacity=0.08] (0,0,0) -- (0,5,0) -- (5,5,0) -- (5,0,0) -- cycle;
\draw[cube] (0,0,4) -- (0,5,4) -- (5,5,4) -- (5,0,4) -- cycle;

\draw[cube] (0,0,0) -- (0,0,4);
\draw[cube] (0,5,0) -- (0,5,4);
\draw[cube] (5,0,0) -- (5,0,4);
\draw[cube] (5,5,0) -- (5,5,4);

\draw[cube] (0,2.5,0) -- (0,3.5,0) -- (0,3.5,2) -- (0,2.5,2) -- cycle;
\fill[brown,cube,opacity=0.6] (0,2.5,0) -- (0,3.5,0) -- (0,3.5,2) -- (0,2.5,2) -- cycle;
\fill[yellow] (0,2.7,1) circle (1pt);

\draw[cube] (3.8,0,0) -- (4.8,0,0) -- (4.8,0,2) -- (3.8,0,2) -- cycle;
\fill[brown,cube,opacity=0.6] (3.8,0,0) -- (4.8,0,0) -- (4.8,0,2) -- (3.8,0,2) -- cycle;
\fill[yellow] (4.6,0,1) circle (1pt);

\draw[cube] (1.5,0,2.7) -- (1.5,0,3.5) -- (3.5,0,3.5) -- (3.5,0,2.7) -- cycle;
\fill[blue,opacity=0.2] (1.5,0,2.7) -- (1.5,0,3.5) -- (3.5,0,3.5) -- (3.5,0,2.7) -- cycle;
\draw[cube] (2.5,0,2.7) -- (2.5,0,3.5);

\draw[cube] (0,1,0) -- (0,1.5,0) -- (0,1.5,3) -- (0,1,3) -- cycle;
\fill[blue,opacity=0.2] (0,1,0) -- (0,1.5,0) -- (0,1.5,3) -- (0,1,3) -- cycle;

\draw[black] (0,0,2.5) -- (5,0,2.5) node[left,xshift=1mm] {$z_0$};

\coordinate (wall_label) at (-3,2,0);
\coordinate (room_label) at (-3,1,1);
\coordinate (obstructions) at (0,-1,4);
\coordinate (origin) at (2.5,0,2);
\coordinate (RX) at (2.75,3,1.5);
\coordinate (a_1) at (0.75,0,2.5);
\coordinate (a_2) at (4.25,0,2.5);
 
   

    

\draw[orange,<->,thick,arrows = {Stealth[length=10pt, inset=2pt]-Stealth[length=10pt, inset=2pt]}] (a_1) -- (RX) node[pos=0.45,left] {$D_1$};
\draw[orange,<->,thick,arrows = {Stealth[length=10pt, inset=2pt]-Stealth[length=10pt, inset=2pt]}] (a_2) -- (RX) node[pos=0.3,below] {$D_2$};

\fill (origin) circle (1pt);
\fill[orange] (RX) circle (3pt) node[below right] {RX};
\fill[orange] (a_1) circle (2pt) node[above ] {$a_1$};
\fill[orange] (a_2) circle (2pt) node[above ] {$a_2$};

\draw[o-] (-1,3,1) -- (wall_label) node[right] {walls};
\draw[o-] (1.5,5.5,5) -- (wall_label) ;
\draw[o-] (-1.25,2.5,-2) -- (wall_label) ;
\draw[o-] (-1,1.2,1.5) -- (room_label) node[right] {room};
\draw[o-] (0.35,1.7,2) -- (obstructions) node[above] {obstructions};
\draw[o-] (0.0,3.1,1.7) -- (obstructions) node[above] {obstructions};
\draw[o-] (2.2,0,3) -- (obstructions) ;



\end{tikzpicture}
    }
    \caption{ Illustration of the system model with two transmit antennas, i.e., $N_t=2$. The receiver (RX) is located in a cuboid room, where portions of the wall are obstructed. The transmit antennas are located on a horizontal line at $a_1$ and $a_2$. The distance to the receiver is given by $D_1$ and $D_2$, respectively. }
    \label{fig: System Model Overview}
\end{figure}

\subsection{Environment}\label{Subsection: Environment Model}
The considered environment is a three-dimensional cuboid room which is defined by the Cartesian product of the sets $\mathcal{X}$, $\mathcal{Y}$, and $\mathcal{Z}$, i.e., $\mathcal{X} \times \mathcal{Y} \times \mathcal{Z}$, with
\begin{align}\label{eq: Receiver position in X}
    \mathcal{X} = \left\{ x \, \Bigg\vert \, -\frac{L_x}{2} \leq x \leq \frac{L_x}{2} \; \text{with} \; L_x > 0 \right\},
\end{align}
\begin{align}\label{eq: Receiver position in Y}
    \mathcal{Y} = \left\{ y \, \big\vert \, 0 \leq y \leq L_y \; \text{with} \; L_y > 0 \right\},
\end{align}
and
\begin{align}\label{eq: Receiver position in Z}
    \mathcal{Z} = \left\{ z \, \Bigg\vert \, -\frac{L_z}{2} \leq z \leq \frac{L_z}{2} \; \text{with} \; L_z > 0 \right\}.
\end{align}
The receiver, e.g., a wearable device, relies on being powered wirelessly and is assumed to be located at an arbitrary position in the environment. The receiver location is defined by the triplet $(x,y,z) \in \mathcal{X} \times \mathcal{Y} \times \mathcal{Z}$. 

\subsection{Transmit Antenna Model}\label{Subsection: Transmit Antenna Model}
Identical and omnidirectioal transmit antennas are employed to illuminate the entire room since the receiver may be located anywhere in the environment.
The location of transmit antenna $i$ is described by the triplet $(a_{ix},a_{iy},a_{iz}) \in \mathcal{X} \times \mathcal{Y} \times \mathcal{Z}$, $\forall i=1,\dots,N_t$. 
The amount of power obtained at the receiver is shown to depend on the placement of the transmit antennas in Subsection \ref{Subsection: Channel Model}. Consequently, the amount of power at the receiver can be maximised by optimising the positions of the transmit antennas. 

There is an inherent trade-off between the achievable performance and the practicality of the transmit antenna deployment. 
Although leveraging the full potential of a flexible placement would yield the best performance, the resulting transmit antenna locations may be impractical or even infeasible in practice. 
Therefore, we restrict the antennas to being placed along a horizontal line defined by $y=0$, $z=z_0 \in \mathcal{Z}$. Consequently, the transmit antenna positions are restricted in the environment through the following condition
\begin{align}\label{eq: TX antenna positions in X}
    &\forall i=1,\dots,N_t: a_{ix} \in \mathcal{X}, a_{iy} = 0,  a_{iz} = z_0 \in \mathcal{Z}.
\end{align}
Throughout this paper, $a_{ix}$ is denoted by $a_{i}$ for simplicity of notation. 

\subsection{Channel and Signal Model}\label{Subsection: Channel Model}
The Fresnel region of a wireless system is defined, e.g., in \cite{Zhang2021}, and depends on the relationship between the wavelength of the EM waves and the distance between the transmit antennas in comparison to their distance from the receiver. 
The size of the Fresnel region increases with the carrier frequency. While the Fresnel region is negligibly small in conventional wireless systems, when using mmWave frequency bands or higher, the size of the Fresnel region becomes relevant, e.g., for indoor scenarios. 
Furthermore, for high carrier frequencies, the severe reflection and scattering losses cause the wireless channel to become predominantly LoS \cite{Zhang2022a,Zhanga, Do21}. A LoS channel is typically assumed for wireless systems operating in the Fresnel region \cite{Li2022, Zhanga, Zhang2021}. 

LoS between transmitter and receiver allows a WPT system to attain adequate power transfer efficiency \cite{Clerckx2022}. 
On the other hand, we note that LoS channels are susceptible to blockages \cite{Zhanga}. For analytical tractability of system design, we make the assumption that a LoS connection between the transmit antennas and the receiver exists and do not consider the impact of channel blockages in this paper.

Based on the previous considerations, free-space LoS propagation of the EM wavefronts is considered in this paper. 
When considering free-space LoS conditions, the equivalent complex baseband channel between transmit antenna $i$ and the receiver is given by \cite{Tse2005a}
\begin{align}\label{eq: LOS Channel}
    g_i = \frac{\sqrt{c}}{D_i} \mathrm{e}^{-\mathrm{j}\frac{2\pi}{\lambda} D_i},
\end{align}
where $\lambda$ is the wavelength, $D_i$ is the distance between transmit antenna $i$ and the receiver, and $c$ is the channel power gain at the reference distance of 1 meter. 
Since the wireless channel of the considered MISO system is modelled using the SWM, both the amplitude and the phase of $g_i$ depend on distance $D_i$. In contrast, for the PWM, the \emph{path loss} between all transmit antennas and the receiver is assumed to be constant, i.e., $D_i \approx D, \forall i=1\dots N_t$.
The MISO channel is represented by the $N_t$-dimensional vector $\boldsymbol{g} = [g_1,\dots,g_{N_t}]^T \in \mathbb{C}^{{N_t} \times 1}$.
The channel model in \eqref{eq: LOS Channel} has been considered for describing systems designed for the Fresnel region in \cite{Li2022} and was adapted for the case of directional antennas in \cite{Zhanga, Zhang2021}. 
In \cite{Zhang2013}, it was shown that the received power $\gamma$ for any given position in the environment is maximised by transmitting vector $\boldsymbol{s} = \sqrt{P}\boldsymbol{g}/\lVert \boldsymbol{g} \rVert_2 \in \mathbb{C}^{N_t \times 1}$ over the $N_t$ transmit antennas, where $P$ is the total transmit power, which is shared among the antennas, and $\lVert \cdot \rVert_2$ is the Euclidean norm. Therefore, the received power $\gamma$ is given by
\begin{align}\label{eq: RX Signal Power}
    \gamma = \lVert \boldsymbol{g}^H \boldsymbol{s} \rVert_2^2 = P \,\lVert \boldsymbol{g} \rVert_2^2 = Pc\sum_{i=1}^{N_t} \frac{1}{D_i^2}.
\end{align}
Thus, the received power is directly proportional to the sum of the squared, inverse distances between the receiver and the transmit antennas. 
While the method in \cite{Zhang2013} was designed for steering beams in the far-field, it also applies to focusing energy beams towards the receiver in the Fresnel region \cite{Zhang2021}. 
Note that the impact of additive noise on the received power is negligible.

\section{Problem Formulation and Optimal Solution}
\subsection{Problem Formulation}\label{Subsection: Problem Formulation}
The objective is to deploy the transmit antennas such that the receiver is powered reliably anywhere in the environment.
Analytically, this is attained by determining the optimal transmit antenna placement, which satisfies the constraints imposed on the positions in Section \ref{Subsection: Transmit Antenna Model}, such that the received power $\gamma$ is maximised at the worst possible receiver location. 
Supposing the $i$-th transmit antenna and the receiver are located at positions $(a_i,0,z_0)$ and $(x,y,z)$, respectively, then the received power $\gamma$, defined in \eqref{eq: RX Signal Power}, is proportional to the following function
\begin{align}\label{eq: f_xyz}
    f_{xyz}({a_{1},...,a_{N_t}}) = \sum_{i=1}^{N_t} \frac{1}{(x-a_{i})^2 + y^2 + (z-z_0)^2},
\end{align}
which depends on the transmit antenna positions. 
The proposed design is obtained as the solution of the following max-min problem
\begin{subequations}\label{Problem: Original Problem before Simplifications}
\begin{alignat}{2}
&\underset{{a_{1},...,a_{N_t}}}{\text{maximise}}
&\qquad& \!\min_{x,y,z} f_{xyz}({a_{1},...,a_{N_t}}) 
\label{Objective: Original Problem before Simplifications}\\
&\text{subject to} 
&& \eqref{eq: Receiver position in X},
\eqref{eq: Receiver position in Y},
\eqref{eq: Receiver position in Z},
\eqref{eq: TX antenna positions in X},
\label{con: Set of constraints before Simplifications}
\end{alignat}
\end{subequations}
which is non-convex due to the objective function in \eqref{Objective: Original Problem before Simplifications}. Note that \eqref{eq: f_xyz} is independent of the wavelength, which is required to define the Fresnel region of a wireless system \cite{Zhang2021}. Here, we assume the wavelength of the system is chosen such that the system is operating in the Fresnel region for a given transmit antenna placement. 

By first identifying the receiver locations resulting in the worst performance independent of the transmit antenna positions, the complexity of the problem can be reduced. The set containing these critical receiver locations is denoted by $\mathcal{X^{\mathrm{crit}}} \times \mathcal{Y^{\mathrm{crit}}} \times \mathcal{Z^{\mathrm{crit}}}$, where $\mathcal{X^{\mathrm{crit}}} \subset \mathcal{X}$, $\mathcal{Y^{\mathrm{crit}}} \subset \mathcal{Y}$, and $\mathcal{Z^{\mathrm{crit}}} \subset \mathcal{Z}$.

\begin{proposition}\label{proposition: Restriction to necessary Locations}
It suffices to consider the receiver locations defined by $y=L_y$, $z=-L_z/2$ and $y=L_y$, $z=L_z/2$ for optimisation problem \eqref{Problem: Original Problem before Simplifications}. $\mathcal{Z^{\mathrm{crit}}}$ is given by $\mathcal{Z^{\mathrm{crit}}}=\{-L_z/2\}$ if $0\leq z_0\leq L_z/2$ and $\mathcal{Z^{\mathrm{crit}}}=\{L_z/2 \}$ if $-L_z/2 \leq z_0\leq 0$. $\mathcal{Y^{\mathrm{crit}}}$ is given by $\mathcal{Y^{\mathrm{crit}}}=\{L_y \}$.
\end{proposition}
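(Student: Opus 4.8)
The plan is to exploit the separable, term-by-term monotonicity of the objective in the variables $y$ and $z$, thereby collapsing the joint inner minimisation over $(x,y,z)$ onto a minimisation over $x$ alone, with $y$ and $z$ pinned to specific boundary values. The starting observation is that each summand of \eqref{eq: f_xyz} has the form $1/D_i^2$ with $D_i^2 = (x-a_i)^2 + y^2 + (z-z_0)^2$, and is therefore a strictly decreasing function of $y^2$ and, separately, of $(z-z_0)^2$, regardless of the value of $x$, of the other transverse coordinate, and of the antenna positions $a_1,\dots,a_{N_t}$.

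First I would handle the $y$-coordinate. Since $y \in [0,L_y]$ and every denominator $D_i^2$ increases with $y^2$ on this interval, each summand, and hence the whole sum $f_{xyz}$, is minimised over $y$ at the largest admissible value $y = L_y$. Because this monotonicity holds at the level of each individual term, the minimising choice $y = L_y$ is optimal for \emph{every} fixed $x$, $z$, and every antenna configuration, with no competing trade-off. This yields $\mathcal{Y}^{\mathrm{crit}} = \{L_y\}$ directly.

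Next I would treat the $z$-coordinate in the same way. Each denominator is increasing in $(z-z_0)^2$, so $f_{xyz}$ is minimised over $z \in [-L_z/2, L_z/2]$ at whichever value of $z$ maximises $(z-z_0)^2$. As $(z-z_0)^2$ is convex in $z$, its maximum over the interval is attained at an endpoint, and comparing $(L_z/2 - z_0)^2$ with $(-L_z/2 - z_0)^2 = (L_z/2 + z_0)^2$ shows the farther endpoint to be $z = -L_z/2$ when $z_0 \ge 0$ and $z = L_z/2$ when $z_0 \le 0$; the boundary case $z_0 = 0$ makes the two endpoints equivalent, which is consistent with the overlap of the two stated regimes. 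This establishes the case distinction for $\mathcal{Z}^{\mathrm{crit}}$, and combining it with the $y$-reduction shows that the inner minimisation in \eqref{Problem: Original Problem before Simplifications} may be restricted to $y = L_y$ together with the appropriate extremal $z$, leaving only the minimisation over $x \in \mathcal{X}$.

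I do not anticipate a genuine obstacle in this argument; the only point requiring care is to justify that the $y$- and $z$-optimisations can be carried out simultaneously and independently of $x$ and of the $a_i$. This follows because the two monotonicities act on each summand in isolation and because $y^2$ and $(z-z_0)^2$ enter every denominator additively and separately, so pushing $y$ to $L_y$ and $z$ to the farther endpoint increases each $D_i^2$ at once, strictly decreasing $f_{xyz}$ for any fixed $x$ and any antenna placement.
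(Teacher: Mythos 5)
Your proposal is correct and takes essentially the same approach as the paper: both arguments collapse the inner minimisation by observing that \eqref{eq: f_xyz} is monotonically decreasing in $y$ and in $\vert z - z_0 \vert$ term by term, forcing $y = L_y$ and $z$ to the endpoint of $\mathcal{Z}$ farther from $z_0$. Your write-up merely makes explicit (via convexity of $(z-z_0)^2$ and the endpoint comparison) what the paper's one-line monotonicity argument leaves implicit.
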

\begin{proof}
Since \eqref{eq: f_xyz} is a monotonically decreasing function of $y$ and $\vert z-z_0 \vert$, the worst objective value is obtained at the lines at $y=L_y$, $z=-{L_z}/2$ if $0\leq z_0\leq L_z/2$ and $y=L_y$, $z=L_z/2$ if $-{L_z}/2 \leq z_0\leq 0$. Therefore, the critical sets are given by $\mathcal{Y^{\mathrm{crit}}}=\{L_y \}$ and $\mathcal{Z^{\mathrm{crit}}}=\{-L_z/2\}$ if $z_0 \geq 0$ and $\mathcal{Z^{\mathrm{crit}}}=\{L_z/2 \}$ if $z_0 \leq 0$.
\end{proof}

$\mathcal{X^{\mathrm{crit}}}$ cannot be obtained based on monotonicity due to the dependency between the receiver location and the transmit antenna variables in \eqref{eq: f_xyz}. Restricting the receiver positions to the critical set and defining $L_z^\prime = L_z + 2 \vert z_0 \vert$, allows the following equivalent reformulation of \eqref{eq: f_xyz} 
\begin{align}\label{eq: f_x}
    f_{x}({a_{1},...,a_{N_t}}) =  \sum_{i=1}^{N_t} \frac{1}{(x-a_{i})^2 + L_y^2 + \frac{{L_z^\prime}^2}{4}},
\end{align}
which is independent of variables $y$ and $z$. Finally, \eqref{Problem: Original Problem before Simplifications} is equivalently reformulated as
\begin{subequations}\label{Problem: Original Problem after Simplifications}
\begin{alignat}{2}
&\underset{{a_{1},...,a_{N_t}}}{\text{maximise}}
&\qquad& \!\min_{x} f_{x}({a_{1},...,a_{N_t}}) 
\label{Objective: Original Problem after Simplifications}\\
&\text{subject to} 
&& \eqref{eq: Receiver position in X},
\eqref{eq: TX antenna positions in X}.
\label{con: Set of constraints after Simplifications}
\end{alignat}
\end{subequations}

\subsection{Optimal Solution}\label{Subsection: Optimal Solution}
For the remainder of this paper, the number of transmit antennas is set to $N_t=2$. 
This allows for an intuitive illustration of the geometrical considerations underlying the proposed method, thus offering insight into the solution structure of the considered transmit antenna placement problem.
The application of the method presented in this paper to systems with $N_t>2$ is discussed in Subsection \ref{Subsection: More than 2 antennas}.
In the following, the symmetry of $f_{x}$ in \eqref{eq: f_x} is investigated in Proposition \ref{proposition: Condition on optimal positions for N_t = 2}.
\begin{proposition}\label{proposition: Condition on optimal positions for N_t = 2}
    For $N_t = 2$, the optimal locations of the transmit antennas $a_1$ and $a_2$ must satisfy $a_1^* = -a_2^*$.
\end{proposition}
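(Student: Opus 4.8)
The plan is to exploit a reflection symmetry of the worst-case objective and then show that shifting the antenna pair away from a configuration centred at $x=0$ can never increase this worst-case value; together these force the optimal midpoint to vanish, which is exactly the condition $a_1^*=-a_2^*$.

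First I would record the elementary symmetries of the reduced problem \eqref{Problem: Original Problem after Simplifications}. Writing $R^2 := L_y^2 + \frac{(L_z^\prime)^2}{4} > 0$, the inner function in \eqref{eq: f_x} is $f_x(a_1,a_2)=\sum_{i=1}^{2} \bigl[(x-a_i)^2+R^2\bigr]^{-1}$. Since the feasible set $\mathcal{X}=[-L_x/2,L_x/2]$ for the receiver coordinate $x$ is symmetric about the origin, the substitution $x\mapsto -x$ yields $\min_x f_x(a_1,a_2)=\min_x f_x(-a_1,-a_2)$, while relabelling the identical antennas gives $\min_x f_x(a_1,a_2)=\min_x f_x(a_2,a_1)$. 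Hence the objective is invariant under the map $(a_1,a_2)\mapsto(-a_2,-a_1)$, whose fixed points are precisely the placements with $a_1=-a_2$. I note that a naive concavity/averaging argument fails here, since each summand is concave in the midpoint direction only for small displacements, so a more careful comparison is needed.

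Next I would parametrise the pair by its midpoint $m=(a_1+a_2)/2$ and half-spacing $d=|a_2-a_1|/2$, so that $a_1=-a_2$ becomes $m=0$. The substitution $u=x-m$ turns the inner objective into $\min_{u\in I_m}\tilde f(u)$, where $\tilde f(u)=\bigl[(u-d)^2+R^2\bigr]^{-1}+\bigl[(u+d)^2+R^2\bigr]^{-1}$ is \emph{even} in $u$, and $I_m=[-L_x/2-m,\,L_x/2-m]$ is the fixed interval shifted by $-m$. The heart of the argument is a comparison lemma: for every feasible $m$, $\min_{I_m}\tilde f\le \min_{I_0}\tilde f$ with $I_0=[-L_x/2,L_x/2]$. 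To prove it, I would take $x_0^\star$ attaining the minimum of $\tilde f$ over $I_0$; by evenness $-x_0^\star$ attains it too, and a short case distinction (assuming $m\ge 0$ by the reflection symmetry, and using only $m\le L_x/2$) shows that at least one of $\pm x_0^\star$ still lies in $I_m$, so the minimum over $I_m$ is no larger. This establishes that for every fixed spacing $d$ the centred placement $m=0$ is optimal, hence an optimal solution with $a_1^*=-a_2^*$ exists.

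The main obstacle is upgrading this to a strict inequality, which the word ``must'' requires: I must exclude optimal configurations with $m\neq 0$. Here I would invoke the auxiliary bound $\min_{I_m}\tilde f\le \tilde f(-L_x/2-m)=\tilde f(L_x/2+m)<\tilde f(L_x/2)$, valid because $\tilde f$ is strictly decreasing for arguments exceeding $d$. Combined with the comparison lemma, this shows that any hypothetical asymmetric optimum would have to be \emph{dip-dominated}, i.e.\ its worst case would be the interior local minimum of $\tilde f$ at $u=0$, of value $2/(d^2+R^2)$, rather than a boundary point of $\mathcal{X}$. I would then rule out this regime by observing that $2/(d^2+R^2)$ is strictly decreasing in $d$: in the dip-dominated regime the objective could be strictly increased by slightly reducing the spacing $d$ (which keeps $m=0$ feasible), contradicting global optimality. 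Therefore at the true optimum the worst case is attained at the boundary of $\mathcal{X}$, the comparison inequality is strict for $m\neq 0$, and every optimal placement satisfies $a_1^*=-a_2^*$. The only delicate point is the transition between the boundary- and dip-dominated regimes, which I would handle by continuity of $\min_x f_x$ in $d$.
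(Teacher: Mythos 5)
Your proof is correct, but it takes a genuinely different and considerably more rigorous route than the paper. The paper proves Proposition~\ref{proposition: Condition on optimal positions for N_t = 2} by introducing the $\alpha$-superlevel sets $S_\alpha(f_x)$, observing that they are symmetric about the plane $x = a_1 + m/2$, recasting the max--min problem as maximising $\alpha$ subject to $\mathcal{X}\times\mathcal{Y}\times\mathcal{Z} \subseteq S_\alpha(f_x)$, and then asserting --- essentially by geometric intuition (``any other choice would bias some receiver positions over others'') --- that the maximum $\alpha$ requires the symmetry plane to pass through $x=0$. You replace that appeal to intuition with a complete exchange argument: the midpoint/half-spacing parametrisation turns the worst case into $\min_{I_m}\tilde f$ with $\tilde f$ even and $I_m$ a shifted copy of $I_0$; your comparison lemma (at least one of $\pm x_0^\star$ remains in $I_m$, using $|m|\le L_x/2$) shows centring is weakly optimal for every spacing $d$; and, crucially, you supply the strictness that the word ``must'' demands and that the paper never actually establishes. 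Your endpoint bound $\tilde f(L_x/2+m)<\tilde f(L_x/2)$ is sound, since feasibility gives $d+|m|\le L_x/2$, so both arguments lie on the strictly decreasing tail $u>d$ of $\tilde f$; this forces any asymmetric optimum to be dip-dominated, and your perturbation in $d$ (shrinking $d$ strictly raises $\tilde f(0)=2/(d^2+R^2)$ while continuity keeps the boundary value above the old optimum, noting $d>0$ necessarily holds in the dip-dominated case) yields the contradiction. The one step you leave implicit is that $\min_{I}\tilde f$ over an interval $I$ is attained only at the endpoints of $I$ or at $u=0$; this requires $\tilde f$ to have at most three real stationary points, which follows from the same computation the paper carries out after the proposition (the pair $a_1^{(\text{III})}, a_1^{(\text{IV})}$ is imaginary for the relevant sign of the argument), so it is a routine verification rather than a gap. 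As for what each approach buys: the paper's level-set picture is short and is what it later reuses informally for $N_t>2$; your argument is self-contained and fully rigorous, and your boundary- versus dip-dominated dichotomy anticipates exactly the case split the paper only reaches later, in Lemmas~\ref{lemma: Inverse quadratic functions} and~\ref{lemma: Different critical sets}, between the critical sets $\{ -L_x/2, L_x/2 \}$ and $\{ -L_x/2, 0, L_x/2 \}$.
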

\begin{proof}
For any $a_1,a_2 \in \mathcal{X}$ with $a_2>a_1$, the $\alpha$-superlevel set of $f_x$ is defined as
\begin{align}
    S_\alpha(f_x) = \{ x \in \mathcal{X} \; \vert \; f_x(a_1, a_1 + m) \geq \alpha \},
\end{align}
where $m=a_2-a_1$.
By inspection, the set $S_\alpha(f_x)$ is symmetrical around the plane $x=a_1 +m/2$ for any choice of $\alpha$.
The max-min problem \eqref{Problem: Original Problem after Simplifications} is equivalent to maximising the infimum of $S_\alpha(f_x)$ with respect to (w.r.t.) $\alpha$ while forcing $S_\alpha(f_x)$ to envelope the whole environment $\mathcal{X} \times \mathcal{Y} \times  \mathcal{Z}$. This is equivalent to the following optimisation problem
\begin{subequations}
\begin{alignat}{2}
&\underset{\alpha}{\text{maximise}}
&\qquad& \!\inf S_\alpha(f_{x}) 
\label{Objective: Equivalent Problem}\\
&\text{subject to} 
&& \mathcal{X}\times\mathcal{Y}\times\mathcal{Z} \subseteq S_\alpha(f_x).
\label{con: Envelope Constraint}
\end{alignat}
\end{subequations}
Since the amount of received power decreases as the distance between the receiver and the transmit antennas increases, the level sets departing from the transmit antenna positions must decrease. Since $S_\alpha(f_x)$ must encapsulate the entire room, the maximum value of $\alpha$ is only attained if the symmetry plane passes through the point $(0,0,a_{z_0})$, i.e., $a_1 +m/2=0$, which implies $a_1^* = -a_2^*$.
Intuitively any other choice of $x$, i.e. $0\neq x=a_1 +m/2$, would bias some receiver positions over others, thus causing the performance at the worst position to decrease. 
\end{proof}
Proposition \ref{proposition: Condition on optimal positions for N_t = 2} allows dropping the dependency on $a_2$ in the objective function $f_x$, i.e., $f_x(a_1,a_2)=f_x(a_1)$, with $a_2=-a_1$. For a given $x$, the optimal transmit antenna positions are obtained by determining the stationary points of function $f_x(a_1)$ as follows
\begin{align}\label{eq: Optimality condition for a_1}
    \frac{\partial f_x(a_1)}{\partial a_1} \overset{!}{=} 0.
\end{align}
Solving \eqref{eq: Optimality condition for a_1} for $a_1$, five stationary points $a_1^{\beta}(x,L_y,{L_z^\prime})$, $\beta \in \{ (\text{I}), (\text{II}), (\text{III}), (\text{IV}), (\text{V}) \}$, are found and given in the following
\begin{align}\label{eq: Critical positions for a_1}
    a_1^{(\text{I})}(x,L_y,{L_z^\prime}) &= \frac{1}{2}\sqrt{e(x,L_y,{L_z^\prime}) - d(x,L_y,{L_z^\prime})} \nonumber \\ 
    &= -a_1^{(\text{II})}(x,L_y,{L_z^\prime}), 
\end{align}
\begin{align}\label{eq: Critical positions for a_1_part2}
    a_1^{(\text{III})}(x,L_y,{L_z^\prime}) &= \frac{1}{2}\sqrt{-e(x,L_y,{L_z^\prime}) - d(x,L_y,{L_z^\prime})} \nonumber \\ 
    &= -a_1^{(\text{IV})}(x,L_y,{L_z^\prime}), 
\end{align}
\begin{align}\label{eq: Critical positions for a_1_part3}
    a_1^{(\text{V})}(x,L_y,{L_z^\prime}) &= 0,
\end{align}
where $e(x,L_y,{L_z^\prime}) = 4x\sqrt{4x^2 + 4L_y^2 + {L_z^\prime}^2}$ and $d(x,L_y,{L_z^\prime}) = \left(4x^2+4L_y^2 + {L_z^\prime}^2\right)$. Using the symmetrical relationships $e(-x,L_y,{L_z^\prime}) = -e(x,L_y,{L_z^\prime})$ and $d(-x,L_y,{L_z^\prime}) = d(x,L_y,{L_z^\prime})$, we obtain $a_1^{(\text{III})}(x,L_y,{L_z^\prime}) = a_1^{(\text{I})}(-x,L_y,{L_z^\prime})$ and $a_1^{(\text{IV})}(x,L_y,{L_z^\prime}) = a_1^{(\text{II})}(-x,L_y,{L_z^\prime})$. Thus, it is sufficient to investigate the three stationary points $a_1^{(\text{I})}(x,L_y,{L_z^\prime})$, $a_1^{(\text{II})}(x,L_y,{L_z^\prime})$, and $a_1^{(\text{V})}(x,L_y,{L_z^\prime})$.
Additionally, due to the symmetrical relationship between the antennas, i.e., $a_2=-a_1$, $a_1^{(\text{II})}(x,L_y,{L_z^\prime})$ is also redundant. 

\begin{proposition}\label{Proposition: antenna architecture}
    If the geometry of the environment satisfies $4(L_y/L_x)^2 + ({L_z^\prime}/L_x)^2 \geq 3$, the optimal transmit antenna positions are $a_1^* = a_2^* = 0$, otherwise they satisfy $a_1^* = -a_2^* \neq 0$.
\end{proposition}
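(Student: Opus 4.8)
The plan is to use Proposition~\ref{proposition: Condition on optimal positions for N_t = 2} to set $a_2=-a_1$ and reduce \eqref{Problem: Original Problem after Simplifications} to a one-dimensional max--min over $a_1\in[0,L_x/2]$, writing $f_x(a_1)=\frac{1}{(x-a_1)^2+C}+\frac{1}{(x+a_1)^2+C}$ with $C=L_y^2+{L_z^\prime}^2/4$. First I would identify the two natural candidates for the worst-case (minimising) receiver position, namely the centre $x=0$ and the edge $x=L_x/2$. A short second-derivative computation shows that $x=0$ is always a stationary point of $f_x$ in $x$, with $\partial^2 f_x/\partial x^2|_{x=0}\propto 3a_1^2-C$; hence the centre is a local maximum (single-lobe profile) when $a_1<\sqrt{C/3}$ and a local minimum (twin-lobe profile) when $a_1>\sqrt{C/3}$. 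This identifies $a_1=\sqrt{C/3}$ as the structural transition and, together with a unimodality argument, establishes that while $a_1\le\sqrt{C/3}$ the minimiser over $x$ sits at the edge.

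The crux is a single algebraic identity for the edge value $g(a_1):=f_{L_x/2}(a_1)$. Setting $p=L_x/2$ and $M=p^2+C$, I would combine the two fractions and simplify to obtain
\begin{align}
g(a_1)-g(0)=\frac{2\,a_1^2\,\bigl(3p^2-C-a_1^2\bigr)}{M\,[(p-a_1)^2+C]\,[(p+a_1)^2+C]}.
\end{align}
Since the denominator is positive, the sign of $g(a_1)-g(0)$ is exactly the sign of $3p^2-C-a_1^2$. The stated geometric condition is then just a rescaling: $4(L_y/L_x)^2+({L_z^\prime}/L_x)^2\ge 3$ is equivalent to $C\ge 3p^2$, i.e.\ to $3p^2-C\le 0$.

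With this in hand the two cases close quickly. When $C\ge 3p^2$ one has $\sqrt{C/3}\ge p$, so every feasible $a_1\le p$ lies in the single-lobe regime and the worst-case value equals $g(a_1)$; but $3p^2-C-a_1^2\le 0$ forces $g(a_1)\le g(0)$, so the worst-case power is maximised at $a_1^*=0$ and the array is co-located. When $C<3p^2$ one has $3p^2-C>0$, so for every small $a_1\in\bigl(0,\min(\sqrt{C/3},\sqrt{3p^2-C})\bigr)$ the centre is still a local maximum (edge still binding) yet $g(a_1)>g(0)$; thus perturbing the antennas apart strictly raises the worst-case power and $a_1^*\neq 0$, i.e.\ the array is distributed.

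The main obstacle I anticipate is the rigorous justification that, throughout the single-lobe regime $a_1\le\sqrt{C/3}$, the minimum of $f_x$ over $x\in[0,L_x/2]$ is actually attained at the edge rather than at some interior stationary point --- equivalently, that $\partial f_x/\partial x\le 0$ on $(0,L_x/2]$. For $x\ge a_1$ this is immediate since both summands of $\partial f_x/\partial x$ carry the right sign; the delicate sub-case is $0<x<a_1$, where one must show $\Phi(x+a_1)\ge\Phi(a_1-x)$ for $\Phi(t)=t/(t^2+C)^2$, using that $\Phi$ is increasing on $[0,\sqrt{C/3}]$. For the distributed direction this can be sidestepped by a continuity argument from $a_1=0$ (where $f_x=2/(x^2+C)$ is manifestly decreasing), so only the co-located direction genuinely requires the full monotonicity claim.
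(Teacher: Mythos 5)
Your proof is correct, but it takes a genuinely different route from the paper's. The paper works in the \emph{antenna} coordinate: it solves $\partial f_x(a_1)/\partial a_1 = 0$ in closed form, obtains the stationary points \eqref{eq: Critical positions for a_1}--\eqref{eq: Critical positions for a_1_part3}, and observes that the nonzero stationary point $a_1^{(\text{I})}$ is real (hence an admissible antenna position) only if $\vert x \vert > \sqrt{L_y^2/3 + {L_z^\prime}^2/12}$, i.e., $\vert x\vert > \sqrt{C/3}$ in your notation; substituting the worst-case abscissa $x = L_x/2$ turns this realness condition \eqref{eq: condition on geometry} into the geometric inequality \eqref{eq: condition on geometry 2}. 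You instead work in the \emph{receiver} coordinate: your curvature computation at $x=0$ and, crucially, the exact factorisation of $g(a_1)-g(0)$ (which checks out: with $M=p^2+C$ the numerator of the difference is $2a_1^2\,(4p^2-M-a_1^2)=2a_1^2\,(3p^2-C-a_1^2)$) give a \emph{global} comparison of edge values whose sign is read off directly, recovering the same threshold $\sqrt{C/3}$ --- unsurprisingly, since $f$ depends only on $(x-a_1)^2$ and $(x+a_1)^2$ and is thus symmetric under exchanging $x$ and $a_1$. Your route buys some rigour the paper's terse argument elides: the sign identity compares $a_1$ against $a_1=0$ over the whole feasible interval, so the boundary $a_1=\pm L_x/2$ is handled automatically and no second-order classification of stationary points in $a_1$ is needed (the paper defers that structure to Lemmas \ref{lemma: Inverse quadratic functions} and \ref{lemma: Different critical sets}). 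Conversely, the paper's route produces the explicit radicals that are reused downstream for the optimal positions and power expressions, which your comparison argument does not supply.

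One substantive remark: the obstacle you flag for the co-located direction is avoidable, so your proof closes more easily than you fear. You never need the equality $\min_x f_x(a_1) = g(a_1)$, only the trivial bound $\min_x f_x(a_1) \le f_{L_x/2}(a_1) = g(a_1)$ together with $\min_x f_x(0) = g(0)$, which is immediate because $f_x(0) = 2/(x^2+C)$ decreases in $\vert x \vert$. Then $C \ge 3p^2$ gives $\min_x f_x(a_1) \le g(a_1) \le g(0) = \min_x f_x(0)$, strictly for $a_1 \neq 0$, and $a_1^*=0$ follows with no monotonicity-in-$x$ claim at all. (Your claim is in any case true: for $0<x<a_1\le\sqrt{C/3}$, setting $u=a_1-x$, $v=a_1+x$, one gets $\Phi(v)-\Phi(u)\propto (v-u)\bigl(C^2 - 2Cuv - uv(u^2+uv+v^2)\bigr)$; writing $s=uv$ and $u^2+uv+v^2 = 4a_1^2-s$, the bracket is decreasing in $s$ on $[0,a_1^2]$ and at $s=a_1^2$ equals $C^2-2Ca_1^2-3a_1^4\ge 0$ precisely when $a_1^2\le C/3$. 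But as noted, you can delete this step entirely.)
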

\begin{proof}
    The optimal antenna positions must be real numbers. While the stationary point $a_1^{(\text{V})}$ is always real, $a_1^{(\text{I})}(x,L_y,{L_z^\prime})$ is only real and different from $a_1^{(\text{V})}(x,L_y,{L_z^\prime})$, if $e(x,L_y,{L_z^\prime}) > d(x,L_y,{L_z^\prime})$ is satisfied. This holds for 
    \begin{align}\label{eq: condition on geometry}
        \vert x \vert > \sqrt{ \frac{{L_z^\prime}^2}{12} + \frac{L_y^2}{3}} .
    \end{align}
    Consequently, if the inequality in \eqref{eq: condition on geometry} does not hold, then \eqref{eq: Critical positions for a_1} is complex, and thus, unsuitable for describing antenna locations. If the inequality does not hold for the largest possible value of $x$, it does not hold for any value of $x$. Therefore, by inserting $x=L_x/2$ into \eqref{eq: condition on geometry}, a condition purely in terms of the geometry of the system is established:
    \begin{align}\label{eq: condition on geometry 2}
        4 \frac{L_y^2}{L_x^2} + \frac{{L_z^\prime}^2}{L_x^2} \geq 3.
    \end{align}
    If \eqref{eq: condition on geometry 2} is satisfied, then only $a_1^{(\text{V})}(x,L_y)$ is real and the optimal positions for both antennas are $a_1 = a_2 = 0$.
    However, if $4 (L_y/L_x)^2 + ({L_z^\prime}/L_x)^2 < 3$ holds, then $a_1^{(\text{I})}(x,L_y,{L_z^\prime})$ is real and different from zero. 
\end{proof}

In summary, Proposition \ref{Proposition: antenna architecture} shows that a co-located or a distributed antenna architecture is optimal depending on a linear combination of the ratios $L_y / L_x$ and ${L_z^\prime} / L_x$.

\begin{lemma}\label{lemma: Inverse quadratic functions}
    If the geometry of the environment satisfies $4(L_y/L_x)^2 + ({L_z^\prime}/L_x)^2 \geq 3$, then $\mathcal{X^{\mathrm{crit}}} = \{ -L_x/2, L_x/2 \}$.
\end{lemma}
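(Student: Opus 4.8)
The plan is to reduce the lemma to an elementary monotonicity statement by first pinning down the optimal antenna placement. Under the geometric hypothesis $4(L_y/L_x)^2 + ({L_z^\prime}/L_x)^2 \geq 3$, Proposition \ref{Proposition: antenna architecture} already guarantees that the optimal transmit antennas are co-located at the origin, i.e.\ $a_1^* = a_2^* = 0$. My first step is therefore to substitute this placement into the reduced objective \eqref{eq: f_x}, which collapses the sum of two inverse-quadratic terms into a single one,
\begin{align}
f_x(0) = \frac{2}{x^2 + L_y^2 + \frac{{L_z^\prime}^2}{4}}.
\end{align}

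Next I would analyse this function directly on the receiver interval $x \in [-L_x/2,\, L_x/2]$. Its denominator is strictly increasing in $x^2$, so $f_x(0)$ is an even function that attains its maximum at $x=0$ and is strictly decreasing in $\lvert x \rvert$; equivalently, its only interior stationary point is the maximiser $x=0$, so no interior minimiser can exist. Consequently the minimum of $f_x(0)$ over the admissible interval is attained exactly where $\lvert x \rvert$ is largest, namely at the two boundary points $x = \pm L_x/2$, and by evenness these two points yield the same worst-case objective value. This identifies the worst-case receiver abscissae and gives $\mathcal{X^{\mathrm{crit}}} = \{ -L_x/2,\, L_x/2 \}$.

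I do not expect a serious obstacle here, since the coupling between the receiver location and the antenna variables --- the very feature that prevented $\mathcal{X^{\mathrm{crit}}}$ from being determined by monotonicity in the discussion preceding \eqref{eq: f_x} --- disappears once the antennas collapse to the origin, leaving a single inverse-quadratic profile. The only points requiring care are to apply the argument to the \emph{optimal} placement supplied by Proposition \ref{Proposition: antenna architecture} rather than to an arbitrary feasible one, and to record both symmetric endpoints rather than a single one. With those caveats the claim follows immediately from the strict monotonicity of the inverse-quadratic function in $\lvert x \rvert$.
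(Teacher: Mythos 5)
Your proof is correct and takes essentially the same route as the paper: both arguments rest on the geometric condition forcing the unique real stationary point $a_1^{(\text{V})}=0$ (i.e., the co-located optimum $a_1^*=a_2^*=0$ of Proposition \ref{Proposition: antenna architecture}, whose proof precedes the lemma and does not depend on it, so there is no circularity) and then on the monotone decay of the resulting inverse-quadratic profile in $\lvert x \rvert$, which places the minimum at the boundary points $x=\pm L_x/2$. If anything, your explicit substitution $f_x(0) = 2/\bigl(x^2 + L_y^2 + {L_z^\prime}^2/4\bigr)$ makes the step cleaner than the paper's own wording, which reasons about stationary points of $f_x(a_1)$ with respect to $a_1$ and only implicitly uses the symmetry of the objective in $x$ and $a_1$ to conclude decay in $x$.
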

\begin{proof}
Every term in the function $f_x$ is the inverse of a quadratic function \cite{bron}.
Since $f_x$ is a (positive) sum of two positive terms, $f_x$ must also be positive for all inputs, i.e., $\forall x,\forall a_1: f_x(a_1)>0$.

Suppose $a_1^{(\text{V})}$ is the only real stationary point, i.e., \eqref{eq: condition on geometry 2} is satisfied. The global maximum is reached at $a_1^{(\text{V})}$, since only positive values contribute to $f_x$. 
Since there are no other stationary points, the function decays monotonically from the maximum. Therefore, the lowest values occur at the boundaries of $\mathcal{X}$, i.e., $x=-L_x / 2=L_x / 2$. Thus, $\mathcal{X^{\mathrm{crit}}} = \{ -L_x/2, L_x/2 \}$.
\end{proof}

\begin{lemma}\label{lemma: Different critical sets}
    If $4\,(L_y/L_x)^2 + ({L_z^\prime}/L_x)^2 < 3$, the critical points $\mathcal{X^{\mathrm{crit}}}$ are either $\mathcal{X^{\mathrm{crit}}} = \{ -L_x/2, L_x / 2 \}$ or $\mathcal{X^{\mathrm{crit}}} = \{ -L_x/2 , 0, L_x / 2 \}$. The optimal transmit antenna positions are given by 
    \begin{align}\label{eq: Critical values boundaries worse than origin}
        a_1^* &= \frac{L_x}{2} \sqrt{2\, \sqrt{ 4\, \frac{L_y^2}{L_x^2} + \frac{{L_z^\prime}^2}{L_x^2} + 1} - \left(4\, \frac{L_y^2}{L_x^2} + \frac{{L_z^\prime}^2}{L_x^2} + 1 \right)},
    \end{align} 
    and 
    \begin{align}\label{eq: Critical values origin and boundaries equal}
        a_1^* = \frac{L_x}{2} \frac{\sqrt{4 \, \frac{L_y^2}{L_x^2} + \frac{{L_z^\prime}^2}{L_x^2} + 1 }}{\sqrt{3}}, 
    \end{align}
    in the former and the latter case, respectively.
\end{lemma}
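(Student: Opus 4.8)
The plan is to locate, in the distributed regime $4(L_y/L_x)^2+(L_z'/L_x)^2<3$, where the inner minimisation $\min_x f_x$ is attained for a fixed antenna spacing, and then to carry out the outer maximisation over $a_1$ explicitly. The key starting observation is that, with $a_2=-a_1$, the function is symmetric under interchanging the receiver coordinate and the antenna coordinate, i.e.\ $f_x(a_1)=f_{a_1}(x)$. Hence the stationary-point count already established in Proposition~\ref{Proposition: antenna architecture} for $\partial f_x/\partial a_1=0$ transfers verbatim to $\partial f_x/\partial x=0$ with the roles of $x$ and $a_1$ swapped. Writing $g(x):=f_x(a_1)$, this shows that $g$ is even, that $x=0$ is always stationary, and that a non-zero interior stationary point exists precisely when $|a_1|>\sqrt{L_z'^2/12+L_y^2/3}$, in which case $x=0$ is a local minimum and the non-zero interior stationary points are local maxima; for smaller $|a_1|$, $x=0$ is the unique interior stationary point and is a maximum. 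In either regime the only candidate interior minimiser is $x=0$, so $\min_{x\in\mathcal{X}}g(x)=\min\!\big(f_0(a_1),f_{L_x/2}(a_1)\big)$, the boundary value being shared by $x=\pm L_x/2$ by evenness. This already confines $\mathcal{X}^{\mathrm{crit}}$ to a subset of $\{-L_x/2,0,L_x/2\}$ and, since the two boundary points always tie, leaves exactly the two possibilities claimed.

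Next I would analyse the two competing values as functions of $a_1\ge 0$. The centre value $f_0(a_1)=2/(a_1^2+L_y^2+L_z'^2/4)$ is strictly decreasing, whereas the boundary value $f_{L_x/2}(a_1)$ is unimodal: by the same symmetry, $a_1=0$ is a local minimum (this is where $L_x/2>\sqrt{L_z'^2/12+L_y^2/3}$, equivalent to the Lemma's hypothesis, is used), and its unique interior stationary point is the $\beta=(\mathrm{I})$ branch of \eqref{eq: Critical positions for a_1} evaluated at $x=L_x/2$. Substituting $x=L_x/2$ into the expressions for $e$ and $d$ reproduces exactly the closed form \eqref{eq: Critical values boundaries worse than origin}, so $f_{L_x/2}$ rises to its peak there and then falls. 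At $a_1=0$ one has $f_0>f_{L_x/2}$, and I would show the two curves cross exactly once; the crossing abscissa is verified to be \eqref{eq: Critical values origin and boundaries equal} by direct substitution, at which both sides collapse to the common value $24/\big(L_x^2(4\rho-3)\big)$ with $\rho:=1+4(L_y/L_x)^2+(L_z'/L_x)^2$.

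With these facts the outer maximisation of $m(a_1):=\min(f_0,f_{L_x/2})$ reduces to maximising a function that first rises along the increasing branch of $f_{L_x/2}$ and then falls along the decreasing branch of $f_0$. The maximiser is therefore either the peak of $f_{L_x/2}$ (if that peak is reached before the curves cross) or the crossing point itself (if the peak would lie beyond the crossing). Comparing the two candidate abscissae \eqref{eq: Critical values boundaries worse than origin} and \eqref{eq: Critical values origin and boundaries equal}, after cancelling $L_x/2$ and squaring, amounts to comparing $2\sqrt{\rho}-\rho$ with $\rho/3$, i.e.\ to the single scalar condition $\rho\gtrless 9/4$. In the first case only the boundary constraint is active, giving $\mathcal{X}^{\mathrm{crit}}=\{-L_x/2,L_x/2\}$ and $a_1^\ast$ as in \eqref{eq: Critical values boundaries worse than origin}; in the second the centre and boundary are equalised, giving $\mathcal{X}^{\mathrm{crit}}=\{-L_x/2,0,L_x/2\}$ and $a_1^\ast$ as in \eqref{eq: Critical values origin and boundaries equal}.

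The step I expect to be the main obstacle is establishing the global shape of $g$ rigorously---specifically, that the non-zero interior stationary points are genuinely maxima and that no further interior minimum can arise, so that $\min_x g$ truly collapses to $\min(f_0,f_{L_x/2})$. The interchange symmetry $f_x(a_1)=f_{a_1}(x)$ combined with the stationary-point count of Proposition~\ref{Proposition: antenna architecture} is what makes this tractable without a fresh second-derivative computation. A secondary care point is the single-crossing property of $f_0$ and $f_{L_x/2}$, which is what underwrites the clean two-case split of the outer optimisation.
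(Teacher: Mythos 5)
Your proposal is correct, and it arrives at the two closed forms by the same core computations as the paper: \eqref{eq: Critical values boundaries worse than origin} by inserting $x=L_x/2$ into the stationary-point branch \eqref{eq: Critical positions for a_1}, and \eqref{eq: Critical values origin and boundaries equal} by imposing $f_0(a_1^*)=f_{L_x/2}(a_1^*)$. Where you genuinely differ is the surrounding logic. The paper justifies the dichotomy of critical sets informally (the second case ``occurs when the transmit antennas are moved so far towards the boundaries\dots'') and locates the transition only in a separate step, Lemma \ref{lemma: point of transition}, by equating the two formulas; you instead make the interchange symmetry $f_x(a_1)=f_{a_1}(x)$ explicit, use it to transfer the stationary-point count of Proposition \ref{Proposition: antenna architecture} to the $x$-variable, and thereby prove rigorously that $\min_{x\in\mathcal{X}} f_x(a_1)=\min\bigl(f_0(a_1),f_{L_x/2}(a_1)\bigr)$ — a reduction the paper never states. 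Your envelope argument (strictly decreasing $f_0$ against unimodal $f_{L_x/2}$ with a single crossing) then establishes the two-case structure \emph{and} the threshold $\rho=9/4$, with $\rho:=1+4L_y^2/L_x^2+L_z'^2/L_x^2$, i.e.\ $4L_y^2/L_x^2+L_z'^2/L_x^2=5/4$, so Lemma \ref{lemma: point of transition} falls out as a by-product. Your two flagged care points are both benign. The single-crossing property is automatic: writing $K=L_y^2+L_z'^2/4$, $b=L_x/2$, $w=a_1^2+K$, clearing denominators in $f_0=f_{L_x/2}$ collapses to the \emph{linear} equation $3w=b^2+4K$, whose unique positive root is exactly \eqref{eq: Critical values origin and boundaries equal}, and your common value $24/\bigl(L_x^2(4\rho-3)\bigr)$ checks out. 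On the classification of stationary points, one small correction to your phrasing: Proposition \ref{Proposition: antenna architecture} only \emph{counts} real stationary points and does not label them as maxima or minima; the max/min labels come from positivity of $f$ and its decay to zero at infinity (the argument the paper itself uses in the proofs of Lemma \ref{lemma: Inverse quadratic functions} and this lemma), and that argument does transfer under your symmetry, so nothing breaks. In sum: same destination and same key formulas, but your route closes the logical gaps the paper leaves to intuition — the reduction of the inner minimum to three candidate points and the principled derivation of the case split — at the cost of a slightly longer analysis.
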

\begin{proof}
    If \eqref{eq: condition on geometry 2} is not satisfied, then $f_x$ has real stationary points at $a_1^{(\text{I})}=-a_1^{(\text{II})}$ and $a_1^{(\text{V})}$. 
    As the superposition of inverse quadratic functions, $f_x$ is asymptotic to zero when the function's argument goes to infinity \cite{bron}. Consequently, $a_1^{(\text{I})}=-a_1^{(\text{II})}$ must be maxima.
    Since the value of the function decays monotonically from any maximum, and $a_1^{(\text{V})}$ is a stationary point which lies exactly between the two maxima, it must be a local minimum. This is due to the symmetry of $f_x$ around the centre of $\mathcal{X}$. 
    In this case, two scenarios are possible. Either $x=-L_x / 2=L_x / 2$ remains to be the worst possible location (as in Lemma \ref{lemma: Inverse quadratic functions}) or the objective value at an additional, different $x$ location becomes critical.
    In the first case, the critical set becomes $\mathcal{X^{\mathrm{crit}}} = \{ -L_x/2 , L_x / 2 \}$. The optimal transmit antenna position for $a_1^*$ is obtained by inserting $x=L_x / 2$ into \eqref{eq: Critical positions for a_1} which yields \eqref{eq: Critical values boundaries worse than origin}.
    The latter case occurs when the transmit antennas are moved so far towards the boundaries of $\mathcal{X}$ that the performance is favoured there, causing the performance at the origin to decrease the most. Consequently, the worst performance is achieved either at the extremes or at the centre of the interval $\mathcal{X}$. In this case, the set of critical values $\mathcal{X^{\mathrm{crit}}}$ becomes $\mathcal{X^{\mathrm{crit}}} = \{ -L_x/2 , 0, L_x / 2 \}$.
    By definition, the performance at the critical points $x \in \mathcal{X^{\mathrm{crit}}}$ is worst. Therefore, the performance at all critical points must be identical.
    Since $f_{-\frac{L_x}{2}}(a_1) = f_{\frac{L_x}{2}}(a_1)$ holds for any $a_1$ by symmetry, it is sufficient to investigate the equality
    \begin{align}
        f_0(a_1^*) \overset{!}{=} f_{\frac{L_x}{2}}(a_1^*) 
    \end{align}
    which implies \eqref{eq: Critical values origin and boundaries equal}.    
\end{proof}

From Lemma \ref{lemma: Different critical sets} it follows that $a_1^*$ depends purely on the geometry of the system. 
%
Next, in Lemma \ref{lemma: point of transition} a condition on the geometry is established which describes the point of transition from \eqref{eq: Critical values boundaries worse than origin} to \eqref{eq: Critical values origin and boundaries equal}.
\begin{lemma}\label{lemma: point of transition}
    The point of transition between \eqref{eq: Critical values boundaries worse than origin} and \eqref{eq: Critical values origin and boundaries equal} occurs at $4 (L_y/L_x)^2 + ({L_z^\prime}/L_x)^2 = 5/4$.
\end{lemma}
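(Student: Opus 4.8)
The plan is to characterise the transition point as the unique geometry at which the two candidate optimal positions coincide. Recall from Lemma \ref{lemma: Different critical sets} that in the distributed regime the worst-case location is first attained only at the boundaries, so that $a_1^*$ is given by \eqref{eq: Critical values boundaries worse than origin}; as the geometry flattens, the antennas are pushed further outward and the received power at the centre $x=0$ eventually drops until it matches the boundary value, at which instant the centre joins the critical set and $a_1^*$ switches to \eqref{eq: Critical values origin and boundaries equal}. Because $a_1^*$ varies continuously with the geometry across this event, the two expressions must agree exactly at the transition. Hence I would simply equate the right-hand sides of \eqref{eq: Critical values boundaries worse than origin} and \eqref{eq: Critical values origin and boundaries equal} and solve for the geometric ratio.

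First I would cancel the common prefactor $L_x/2$ and introduce the shorthand
\[
u = 4\frac{L_y^2}{L_x^2} + \frac{{L_z^\prime}^2}{L_x^2} + 1,
\]
so that \eqref{eq: Critical values boundaries worse than origin} reads $a_1^* = \tfrac{L_x}{2}\sqrt{2\sqrt{u}-u}$ and \eqref{eq: Critical values origin and boundaries equal} reads $a_1^* = \tfrac{L_x}{2}\sqrt{u}/\sqrt{3}$. Equating the two and squaring once gives
\[
2\sqrt{u} - u = \frac{u}{3},
\]
which rearranges to $2\sqrt{u} = \tfrac{4}{3}u$, i.e. $\sqrt{u} = \tfrac{2}{3}u$.

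Squaring a second time yields $u = \tfrac{4}{9}u^2$, that is $u(4u-9)=0$. Since $u \geq 1 > 0$ by its definition, the spurious root $u=0$ is discarded, leaving $u = 9/4$; substituting back gives $4(L_y/L_x)^2 + ({L_z^\prime}/L_x)^2 = 5/4$, as claimed. The algebra is elementary, so the main thing to be careful about is the justification that the transition is precisely the coincidence of the two formulas rather than merely a necessary condition; this is what the continuity argument above supplies. To close any gap I would add two consistency checks: the value $k = 5/4$ satisfies $k<3$, so it lies in the distributed regime of Proposition \ref{Proposition: antenna architecture} where both formulas are meaningful, and the radicand $2\sqrt{u}-u$ in \eqref{eq: Critical values boundaries worse than origin} evaluates to $u/3 = 3/4 > 0$ at $u=9/4$, confirming that the boundary-only solution remains real up to the transition. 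These checks confirm that $4(L_y/L_x)^2 + ({L_z^\prime}/L_x)^2 = 5/4$ is the genuine point of transition.
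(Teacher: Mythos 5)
Your proposal is correct and follows essentially the same route as the paper, whose proof consists precisely of equating \eqref{eq: Critical values boundaries worse than origin} and \eqref{eq: Critical values origin and boundaries equal}; your substitution $u = 4(L_y/L_x)^2 + ({L_z^\prime}/L_x)^2 + 1$ leading to $u = 9/4$ reproduces the stated condition exactly. The continuity justification and the consistency checks (that $5/4 < 3$ and that the radicand equals $3/4 > 0$ at the transition) are welcome additions that the paper leaves implicit, but they do not change the argument.
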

\begin{proof}
    The point of transition is identified by equating \eqref{eq: Critical values boundaries worse than origin} and \eqref{eq: Critical values origin and boundaries equal}.
\end{proof}

\begin{proposition}\label{proposition: Final result}
    The optimal transmit antenna position $a_1^*$ such that the received power $\gamma$ is maximised at the worst receiver position in the environment is given by
    \begin{align}\label{eq: Optimal summary}
    a_1^* =
    \begin{cases}
      \eqref{eq: Critical values origin and boundaries equal} &\text{if $4 \frac{L_y^2}{L_x^2} + \frac{{L_z^\prime}^2}{L_x^2} \leq \frac{5}{4}$}, \\
      \eqref{eq: Critical values boundaries worse than origin} &\text{if $\frac{5}{4} \leq 4 \frac{L_y^2}{L_x^2} + \frac{{L_z^\prime}^2}{L_x^2} \leq 3$}, \\
      0 &\text{if $4 \frac{L_y^2}{L_x^2} + \frac{{L_z^\prime}^2}{L_x^2} \geq 3$}.
    \end{cases}
\end{align}
\end{proposition}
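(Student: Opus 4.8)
The plan is to assemble the claimed closed form directly from the structural results already proved, since the substantive analysis is contained in Proposition~\ref{Proposition: antenna architecture} and Lemmas~\ref{lemma: Inverse quadratic functions}--\ref{lemma: point of transition}; what remains is to organise the three regimes of the piecewise expression and to justify the threshold values $R=5/4$ and $R=3$, where I abbreviate $R = 4(L_y/L_x)^2 + ({L_z^\prime}/L_x)^2$. The three cases of the stated formula correspond exactly to the three regimes identified earlier: the co-located regime of Proposition~\ref{Proposition: antenna architecture} and the two distributed sub-regimes distinguished by the critical set in Lemma~\ref{lemma: Different critical sets}.

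First I would dispose of the regime $R \geq 3$. By Proposition~\ref{Proposition: antenna architecture} this condition renders $a_1^{(\text{I})}$ complex, so the only admissible stationary point is $a_1^{(\text{V})}=0$; the optimal placement is therefore co-located with $a_1^*=0$, which is the third case. For $R<3$, Proposition~\ref{Proposition: antenna architecture} guarantees a distributed solution ($a_1^*\neq 0$), and Lemma~\ref{lemma: Different critical sets} provides the two candidate closed forms: $a_1^*$ is given by \eqref{eq: Critical values boundaries worse than origin} when $\mathcal{X^{\mathrm{crit}}} = \{-L_x/2,\,L_x/2\}$ and by \eqref{eq: Critical values origin and boundaries equal} when $\mathcal{X^{\mathrm{crit}}} = \{-L_x/2,\,0,\,L_x/2\}$. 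To decide which critical set arises in which sub-range of $R$, I would invoke Lemma~\ref{lemma: point of transition}, which places the transition between the two formulas exactly at $R=5/4$: for $R\leq 5/4$ the antennas sit far enough towards the walls that $x=0$ becomes critical, yielding \eqref{eq: Critical values origin and boundaries equal}, whereas for $5/4\leq R<3$ only the boundaries $x=\pm L_x/2$ are critical, yielding \eqref{eq: Critical values boundaries worse than origin}. Combining these sub-cases with the co-located regime produces the stated three-case expression.

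The main difficulty I anticipate is not any individual computation but verifying that the case assignment is self-consistent, so that the overlapping endpoints in the stated intervals are legitimate rather than ambiguous; this amounts to checking that $a_1^*$ is continuous across both thresholds. I would close this by substituting the threshold values into the closed forms: at $R=5/4$ both \eqref{eq: Critical values boundaries worse than origin} and \eqref{eq: Critical values origin and boundaries equal} should return the common value $\tfrac{\sqrt{3}}{4}L_x$ (consistent with Lemma~\ref{lemma: point of transition}), and at $R=3$ the expression \eqref{eq: Critical values boundaries worse than origin} should collapse to $a_1^*=0$, matching the co-located regime of Proposition~\ref{Proposition: antenna architecture}. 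These two checks confirm continuity at the shared endpoints and show that the half-open versus closed interval choices in the proposition are immaterial, completing the consolidation.
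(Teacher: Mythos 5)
Your proposal is correct and follows essentially the same route as the paper, whose proof is just a one-line citation of Lemma \ref{lemma: Different critical sets}, Lemma \ref{lemma: point of transition}, and \eqref{eq: Critical positions for a_1}: you assemble the three regimes from exactly these results (plus Proposition \ref{Proposition: antenna architecture} for the co-located case). Your endpoint verifications are also accurate --- both \eqref{eq: Critical values boundaries worse than origin} and \eqref{eq: Critical values origin and boundaries equal} evaluate to $\tfrac{\sqrt{3}}{4}L_x$ at $4L_y^2/L_x^2 + {L_z^\prime}^2/L_x^2 = 5/4$, and \eqref{eq: Critical values boundaries worse than origin} vanishes at the value $3$ --- making explicit the continuity across thresholds that the paper leaves implicit.
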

\begin{proof}
    The proof follows from Lemma \ref{lemma: Different critical sets}, Lemma \ref{lemma: point of transition}, and \eqref{eq: Critical positions for a_1}.
\end{proof}

\begin{corollary}
    The optimal transmit antenna deployment, defined in Proposition \ref{proposition: Final result}, yields the maximum received power $\gamma^*$ for the worst receiver position which is given by
    \begin{align}\label{eq: Optimal received power}
    \gamma^* = Pc
    \begin{cases}
        \frac{2}{\frac{4}{3}  L_y^2 + \frac{1}{12}  L_x^2 + \frac{1}{3}{L_z^\prime}^2} &\text{if $4 \frac{L_y^2}{L_x^2} + \frac{{L_z^\prime}^2}{L_x^2} \leq \frac{5}{4}$}, \\
        \frac{2\sqrt{4\frac{L_y^2}{L_x^2} + \frac{{L_z^\prime}^2}{L_x^2}+1}+2}{4L_y^2 + {L_z^\prime}^2} &\text{if $\frac{5}{4} \leq 4 \frac{L_y^2}{L_x^2} + \frac{{L_z^\prime}^2}{L_x^2} \leq 3$}, \\
        \frac{2}{L_y^2 + \frac{1}{4}  L_x^2 + \frac{1}{4}  {L_z^\prime}^2 } &\text{if $4 \frac{L_y^2}{L_x^2} + \frac{{L_z^\prime}^2}{L_x^2} \geq 3$}.
    \end{cases}
    \end{align}
\end{corollary}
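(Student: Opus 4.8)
The plan is to obtain $\gamma^*$ by direct substitution. Since the received power satisfies $\gamma = Pc\, f_x(a_1)$ by \eqref{eq: RX Signal Power} and \eqref{eq: f_x}, and the worst receiver location together with the optimal antenna position $a_1^*$ have already been pinned down in Proposition \ref{proposition: Final result} and Lemmas \ref{lemma: Inverse quadratic functions}--\ref{lemma: point of transition}, it only remains to evaluate $Pc\, f_x(a_1^*)$ at any single critical location $x \in \mathcal{X^{\mathrm{crit}}}$ in each of the three geometric regimes. Because $f_x$ attains the same (worst) value at every point of $\mathcal{X^{\mathrm{crit}}}$ by definition, I am free to pick the algebraically most convenient critical $x$ in each case. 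To keep the nested radicals manageable, I would first abbreviate $p = L_x/2$, $R^2 = L_y^2 + {L_z^\prime}^2/4$, and $s = \sqrt{p^2 + R^2}$, which turns the recurring group $4L_y^2/L_x^2 + {L_z^\prime}^2/L_x^2 + 1$ into $s^2/p^2$ and $4L_y^2 + {L_z^\prime}^2$ into $4R^2$.

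Two of the three branches are immediate. In the co-located regime $4L_y^2/L_x^2 + {L_z^\prime}^2/L_x^2 \geq 3$, we have $a_1^*=0$ with worst location $x=\pm L_x/2$ by Lemma \ref{lemma: Inverse quadratic functions}, so $f_{L_x/2}(0) = 2/(p^2+R^2)$, and re-expanding $p^2+R^2$ in $L_x,L_y,{L_z^\prime}$ yields the third branch of \eqref{eq: Optimal received power}. In the regime $4L_y^2/L_x^2 + {L_z^\prime}^2/L_x^2 \leq 5/4$, Lemma \ref{lemma: Different critical sets} gives $0 \in \mathcal{X^{\mathrm{crit}}}$, which is convenient since $f_0(a_1) = 2/(a_1^2+R^2)$; substituting $a_1^{*2} = s^2/3$ from \eqref{eq: Critical values origin and boundaries equal} and using $s^2/3 + R^2 = \tfrac{4}{3}s^2 - p^2$ followed by resubstitution produces the first branch.

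The intermediate regime $5/4 \leq 4L_y^2/L_x^2 + {L_z^\prime}^2/L_x^2 \leq 3$ is the one requiring real work, since $\mathcal{X^{\mathrm{crit}}}=\{\pm L_x/2\}$ but $a_1^*$ is the nested radical \eqref{eq: Critical values boundaries worse than origin}, which in the abbreviations reads $a_1^{*2} = 2ps - s^2$. I would evaluate $f_{L_x/2}(a_1^*) = (u+v)/(uv)$ with $u = (p-a_1^*)^2 + R^2$ and $v = (p+a_1^*)^2 + R^2$. The numerator collapses at once, $u+v = 2p^2 + 2a_1^{*2} + 2R^2 = 4ps$, whereas the denominator $uv = (p^2 - a_1^{*2})^2 + 2R^2(p^2 + a_1^{*2}) + R^4$ is where the radical must be tamed. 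Setting $q = s-p$ turns $p^2 - a_1^{*2}$ into $(s-p)^2$ and $R^2 = s^2-p^2$ into $(s-p)(s+p)$, after which the quartic-in-$q$ contributions telescope to $uv = 8p^2 s(s-p)$. Dividing gives $f_{L_x/2}(a_1^*) = 4ps/\bigl(8p^2 s(s-p)\bigr) = (s+p)/(2pR^2)$, and resubstituting $s/p = \sqrt{4L_y^2/L_x^2 + {L_z^\prime}^2/L_x^2 + 1}$ and $R^2 = (4L_y^2 + {L_z^\prime}^2)/4$ reproduces the middle branch.

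The only genuine obstacle is this denominator simplification in the intermediate case: a brute-force expansion of $uv$ entangles polynomial and radical terms, and without recognising $p^2 - a_1^{*2} = (s-p)^2$ (equivalently, the substitution $q=s-p$) the cancellation of the quartic terms is easy to overlook. Everything else is routine back-substitution, so I would keep the substitutions $p,R,s$ in force throughout and only re-expand into $L_x,L_y,{L_z^\prime}$ at the very end of each case to match the stated form of \eqref{eq: Optimal received power}.
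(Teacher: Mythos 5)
Your proposal is correct and follows essentially the same route as the paper: the paper's proof likewise obtains $\gamma^*$ by inserting the optimal positions \eqref{eq: Optimal summary} into $f_x$ from \eqref{eq: f_x} and evaluating at the critical locations $x \in \mathcal{X^{\mathrm{crit}}}$, simply stating the three resulting expressions without showing the algebra. Your substitutions $p = L_x/2$, $R^2 = L_y^2 + {L_z^\prime}^2/4$, $s = \sqrt{p^2+R^2}$ (giving $a_1^{*2} = 2ps - s^2$, $u+v = 4ps$, $uv = 8p^2 s(s-p)$, hence $f_{L_x/2}(a_1^*) = (s+p)/(2pR^2)$) check out and correctly supply the intermediate-regime computation the paper omits.
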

\begin{proof}
    The received power $\gamma^*$ is obtained by inserting \eqref{eq: f_xyz} into \eqref{eq: RX Signal Power} for the optimal transmit antenna positions \eqref{eq: Optimal summary}
    \begin{align}
        \gamma^*  = P c f_x(a_1^*,a_2^*),
    \end{align}
    where $f_x(a_1^*,a_2^*)$ is the optimal objective value of the optimisation problem in \eqref{Problem: Original Problem after Simplifications}. The objective value is obtained by inserting \eqref{eq: Optimal summary} into \eqref{eq: f_x} and evaluating the resulting function at $x \in \mathcal{X^{\mathrm{crit}}}$. Then, $f_x(a_1^*,a_2^*)$ is given by
    \begin{align}\label{eq: max rx power 3}
        \frac{2}{\frac{4}{3}  L_y^2 + \frac{1}{12}  L_x^2 + \frac{1}{3}{L_z^\prime}^2} \;\; \text{if} \;\; 4\frac{L_y^2}{L_x^2} + \frac{{L_z^\prime}^2}{L_x^2} \leq \frac{5}{4},
    \end{align}
    \begin{align}\label{eq: max rx power 2}
        \frac{2\sqrt{4\frac{L_y^2}{L_x^2} + \frac{{L_z^\prime}^2}{L_x^2}+1}+2}{4L_y^2 + {L_z^\prime}^2} \;\; \text{if} \;\; \frac{5}{4} \leq 4\frac{L_y^2}{L_x^2} + \frac{{L_z^\prime}^2}{L_x^2} \leq 3,
    \end{align}
    and
    \begin{align}\label{eq: max rx power 1}
     \frac{2}{L_y^2 + \frac{1}{4}  L_x^2 + \frac{1}{4}  {L_z^\prime}^2 } \;\; \text{if} \;\; 4 \frac{L_y^2}{L_x^2} + \frac{{L_z^\prime}^2}{L_x^2} \geq 3.
    \end{align}
\end{proof}

\subsection{Power Gain over Optimal Far-Field Solution}
Next, the possible power gain of employing the proposed optimal antenna positioning over the optimal far-field position is quantified. The ideal position for all transmit antennas under the far-field assumption lies in the centre of $\mathcal{X}$. This follows from approximating the path losses corresponding to different transmit antennas with a constant value which is analogous to the co-located transmit antenna architecture where the optimal transmit antenna positions are $a_1^*=a_2^*=0$.
The performance gain $\eta$ which is obtained by designing the system using the exact SWM instead of the approximated PWM depends on the geometry of the environment. Therefore, the gain $\eta$ is computed by comparing the performance of using the optimal transmit antenna locations in the Fresnel region \eqref{eq: Optimal summary} to the optimal ones for the far-field, which are obtained by placing both antennas at zero. 

\begin{corollary}
The gain that is obtained by using the optimal transmit antenna placement depends on the geometry of the environment and is given by
\begin{align}\label{eq: Gain summary}
    \eta =
    \begin{cases}
        \frac{12\frac{L_y^2}{L_x^2} +3\frac{{L_z^\prime}^2}{L_x^2}+3} {16 \frac{L_y^2}{L_x^2}+4\frac{{L_z^\prime}^2}{L_x^2}+1} &\text{if $4\frac{L_y^2}{L_x^2} + \frac{{L_z^\prime}^2}{L_x^2} \leq \frac{5}{4}$}, \\
        \frac{\left(4\frac{L_y^2}{L_x^2}+\frac{{L_z^\prime}^2}{L_x^2}+1\right)^{\frac{3}{2}} + 4\frac{L_y^2}{L_x^2}+\frac{{L_z^\prime}^2}{L_x^2}+1}{16\frac{L_y^2}{L_x^2}+4\frac{{L_z^\prime}^2}{L_x^2}} &\text{if $\frac{5}{4} \leq 4\frac{L_y^2}{L_x^2} + \frac{{L_z^\prime}^2}{L_x^2} \leq 3$}, \\
        1 &\text{if $4\frac{L_y^2}{L_x^2} + \frac{{L_z^\prime}^2}{L_x^2} \geq 3$}.
    \end{cases}
\end{align}
\end{corollary}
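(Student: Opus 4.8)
The plan is to obtain $\eta$ as the ratio of the optimal Fresnel received power to the worst-case power delivered by the far-field (co-located) placement, and then to simplify this ratio in each of the three geometric regimes of $\eqref{eq: Optimal received power}$. First I would fix the denominator, i.e.\ the far-field reference power. Setting $a_1 = a_2 = 0$ in $\eqref{eq: f_x}$ gives $f_x(0,0) = 2/\bigl(x^2 + L_y^2 + {L_z^\prime}^2/4\bigr)$, which over $\mathcal{X}$ is minimised at $x = \pm L_x/2$; hence the far-field worst-case power is $\gamma_{\mathrm{ff}} = 2Pc/\bigl(L_y^2 + L_x^2/4 + {L_z^\prime}^2/4\bigr)$, and this is independent of the regime because the far-field benchmark always places both antennas at the origin. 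As a consistency check, $\gamma_{\mathrm{ff}}$ coincides with the third line of $\eqref{eq: Optimal received power}$, which is expected since the co-located regime and the far-field placement are identical.

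Next I would form $\eta = \gamma^*/\gamma_{\mathrm{ff}}$ from the three expressions for $\gamma^*$ in the preceding corollary. To keep the bookkeeping light, I would abbreviate the recurring geometry parameter as $K := 4(L_y/L_x)^2 + ({L_z^\prime}/L_x)^2$ and record two identities that drive every cancellation: $4L_y^2 + {L_z^\prime}^2 = L_x^2 K$ and $L_y^2 + L_x^2/4 + {L_z^\prime}^2/4 = (L_x^2/4)(K+1)$. With these, the first case ($K \le 5/4$) is immediate: dividing the two rational powers and factoring numerator and denominator yields $\eta = 3(K+1)/(4K+1)$, which is exactly the first line of $\eqref{eq: Gain summary}$ once $K$ is re-expanded in terms of the ratios. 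The third case ($K \ge 3$) is even simpler, since there $a_1^* = 0$ makes the optimal Fresnel placement identical to the far-field one, giving $\gamma^* = \gamma_{\mathrm{ff}}$ and hence $\eta = 1$.

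The substantive work is the middle case ($5/4 \le K \le 3$). Substituting the two identities into $\gamma^*/\gamma_{\mathrm{ff}}$ cancels all factors of $L_x^2$ and collapses the ratio to $\eta = (\sqrt{K+1}+1)(K+1)/(4K)$; expanding the product as $(K+1)^{3/2} + (K+1)$ and restoring $K$ in terms of $L_y/L_x$ and ${L_z^\prime}/L_x$ reproduces the second line of $\eqref{eq: Gain summary}$. I expect the telescoping of the square-root factor here to be the only genuinely delicate step, the remainder being routine cancellation. Finally, I would verify continuity of the piecewise formula at the two breakpoints: both the first and second expressions evaluate to $9/8$ at $K = 5/4$, and both the second and third evaluate to $1$ at $K = 3$, consistent with the transition point established in Lemma $\ref{lemma: point of transition}$.
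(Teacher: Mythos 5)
Your proposal is correct and follows essentially the same route as the paper: the gain is the ratio $\gamma^*/\gamma_{\mathrm{ff}}$, where the far-field reference is obtained by placing both antennas at the origin and noting that the worst receiver position then lies at $x=\pm L_x/2$, so that $\gamma_{\mathrm{ff}}$ equals the third case of \eqref{eq: Optimal received power}. Your version merely makes explicit the algebra (via $K = 4(L_y/L_x)^2 + ({L_z^\prime}/L_x)^2$) and the continuity checks at $K=5/4$ and $K=3$ that the paper's terse proof leaves to the reader, and these all check out.
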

\begin{proof}
The optimal objective value is given by \eqref{eq: max rx power 3}-\eqref{eq: max rx power 1} and depends on the geometry of the environment.
In the far-field, the worst receiver position lies at the boundaries of $\mathcal{X}$ since the antennas are placed at $x=0$. Thus, the objective value of the far-field solution corresponds to \eqref{eq: max rx power 1}.
\end{proof}

\subsection{Systems with $N_t>2$}\label{Subsection: More than 2 antennas}
The methodology outlined in this paper may be extended to systems with a larger number of transmit antennas, i.e., $N_t > 2$. By virtue of the max-min problem, an asymmetrical transmit antenna placement is suboptimal. Therefore, for an uneven number of transmit antennas $N_t$, one optimal transmit antenna position must lie in the centre of $\mathcal{X}$. Moreover, when $N_t$ is even, Proposition \ref{proposition: Condition on optimal positions for N_t = 2} must hold for pairs of transmit antennas to ensure a symmetrical transmit antenna placement.
The approach for identifying all critical receiver positions in $\mathcal{X}^\mathrm{crit}$ may be cumbersome for large $N_t$, as the number of critical positions $\vert \mathcal{X}^\mathrm{crit} \vert$ is expected to grow with $N_t$. Consequently, the number of transition points, which are necessary for formulating the optimal positions, is expected to grow. 
Alternatively, the optimal transmit antenna placement may then be obtained by solving \eqref{Problem: Original Problem after Simplifications} numerically. A numerical method of solving the problem is discussed at the end of Section \ref{Subsection: Result antennas}.

\section{Results and Performance Evaluation}
\subsection{Optimal Position of $a_1^*$}\label{Subsection: Result antennas}
The optimal position of $a_1^*$ \eqref{eq: Optimal summary} is visualised as a function of the geometry of the system in Fig. \ref{fig: Optimal a_1 position vs. L_y/L_x}. 
\begin{figure}[t]
    \centering
    \scalebox{0.82}{
\definecolor{black}{RGB}{0,0,0}
\definecolor{bluish-green}{RGB}{0,158,115}
\definecolor{yellow}{RGB}{240, 228, 66}

\begin{tikzpicture}

\begin{axis}[
scale only axis,
tick align=outside,
tick pos=left,
unbounded coords=jump,
xlabel={$\frac{L_y}{L_x} \rightarrow$},
xmin=0.0, xmax=1,
ylabel={$a_1^* \rightarrow$},
ymin=-0.001, ymax=0.501,
ytick={0,0.125,0.25,0.375,0.5},
yticklabels={$0$,$\frac{L_x}{8}$,$\frac{L_x}{4}$,$\frac{3L_x}{8}$,$\frac{L_x}{2}$},
grid=both,
grid style={line width=.1pt, draw=gray!10},
major grid style={line width=.2pt,draw=gray!50},
minor tick num=5,
legend cell align=left,
legend pos=south west,
legend style={
    at={(0,0)},
    legend columns=2,
    legend entries={$\frac{L_z^\prime}{L_x} = 0$, $\frac{L_z^\prime}{L_x} =\frac{\sqrt{5}}{8}$, $\frac{L_z^\prime}{L_x} =\frac{\sqrt{5}}{4}$, $\frac{L_z^\prime}{L_x} =\frac{3\sqrt{5}}{8}$, $\frac{L_z^\prime}{L_x} =\frac{\sqrt{5}}{2}$}
    }
]
\coordinate (centre) at (0.68,0.25);
\addlegendimage{mark=x,black};
\addlegendimage{mark=x,orange};
\addlegendimage{mark=x,blue};
\addlegendimage{mark=x,bluish-green};
\addlegendimage{mark=x,yellow};



\addplot [
line cap=round,
smooth,
semithick,
black
] file[skip first]{Figures/antenna_data/L_y_over_L_x/param0.dat}
coordinate[above, pos=0.65] (0);

\addplot [
black,
mark=x,
only marks
] file[skip first]{Figures/antenna_data/simulation_data/0.dat};

\addplot [
line cap=round,
smooth,
semithick,
orange
] file[skip first]{Figures/antenna_data/L_y_over_L_x/param1.dat}
coordinate[above, pos=0.6] (1);

\addplot [
orange,
mark=x,
only marks
] file[skip first]{Figures/antenna_data/simulation_data/1.dat};

\addplot [
line cap=round,
smooth,
semithick,
blue
] file[skip first]{Figures/antenna_data/L_y_over_L_x/param2.dat}
coordinate[above, pos=0.53] (2);

\addplot [
blue,
mark=x,
only marks
] file[skip first]{Figures/antenna_data/simulation_data/2.dat};

\addplot [
line cap=round,
smooth,
semithick,
bluish-green,
] file[skip first]{Figures/antenna_data/L_y_over_L_x/param3.dat}
coordinate[above, pos=0.65] (3);

\addplot [
bluish-green,
mark=x,
only marks
] file[skip first]{Figures/antenna_data/simulation_data/3.dat};

\addplot [
line cap=round,
smooth,
semithick,
yellow
] file[skip first]{Figures/antenna_data/L_y_over_L_x/param4.dat}
coordinate[left, pos=0.57] (4);

\addplot [
yellow,
mark=x,
only marks
] file[skip first]{Figures/antenna_data/simulation_data/4.dat};

\addplot [
yellow, 
semithick,
domain=0.661860081130729:1
]{(0)};

\addplot [
bluish-green, 
semithick,
domain=0.757979588402178:1
]{(0)};

\addplot [
blue, 
semithick,
domain=0.819765022276868:1
]{(0)};

\addplot [
orange, 
semithick,
domain=0.854695075999124:1
]{(0)};

\addplot [
black, 
semithick,
domain=0.866025403784439:1
]{(0)};


\end{axis}

\end{tikzpicture}
    }
    \caption{Optimal placement of antenna $a_1^*$ \eqref{eq: Optimal summary}.}
    \label{fig: Optimal a_1 position vs. L_y/L_x}
\end{figure}
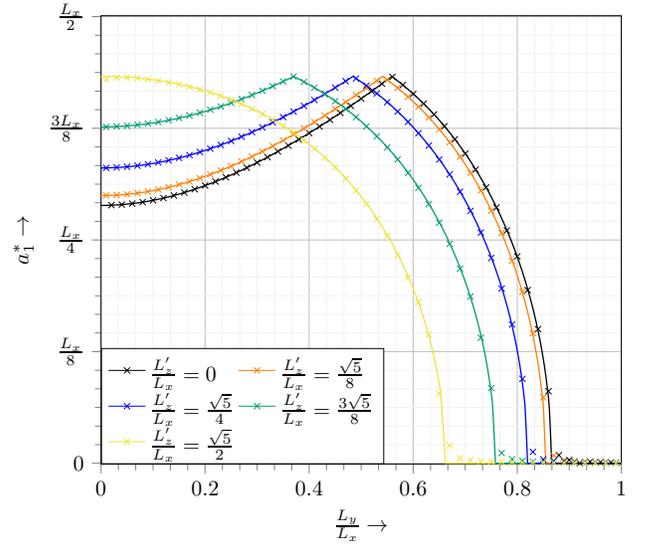
To this end, the optimal position of $a_1^*$ is depicted as a function of one variable and one parameter. The optimal position $a_1^*$ is given as a function of $L_y / L_x$, i.e., $a_1^*(L_y / L_x)$, while the value of ${L_z^\prime} / L_x$ is fixed and the parameter value is indicated by the legend entry corresponding to the respective plot. The function $a_1^*(L_y / L_x)$ is plotted by considering the parameter values ${L_z^\prime} / L_x = \{ 0, \sqrt{5}/8, \sqrt{5}/4, 3\sqrt{5}/8, \sqrt{5}/2 \}$. The parameter values are chosen to represent the boundaries of the possible geometrical ratios of the environment. For ${L_z^\prime} / L_x = 0$, the environment collapses into a two-dimensional room. The maximum value ${L_z^\prime} / L_x = \sqrt{5}/2$ follows from the condition in Lemma \ref{lemma: point of transition}. 

Fig. \ref{fig: Optimal a_1 position vs. L_y/L_x} provides insight into how the geometrical properties of the environment impact the optimal transmit antenna positions. By basing the proposed method on the SWM, an optimal solution was obtained which is capable of capturing the effects in the Fresnel region and naturally converges to the far-field solution as $4 L_y^2 / L_x^2 + {L_z^\prime}^2 / L_x^2$ grows. 
In the far-field, the impact of the spherical nature of the EM wavefronts at the worst receiver locations diminishes and the PWM becomes sufficiently accurate. 
Consequently, when the optimal solution for the Fresnel region coincides with the far-field solution, i.e., $4 L_y^2 / L_x^2 + {L_z^\prime}^2 / L_x^2 \geq 3$, then the WPT system does not have to be designed for the Fresnel region.
However, for $4 L_y^2 / L_x^2 + {L_z^\prime}^2 / L_x^2 \leq 3$, designing the system for the Fresnel region has to be ensured, which requires considering the relationship between the wavelength and the distance between the transmit antennas in comparison to their distance from the receiver. 
The analytical solution was validated by solving the problem numerically. To this end, the quadratic form method in \cite{Shen2018} for Fractional Programming was extended to max-min-sum-ratio problems and a sequence of parameterised convex optimisation problems was solved using CVXPY \cite{diamond2016cvxpy}. The numerical results are indicated by the crosses overlaying the respective analytical results in Fig. \ref{fig: Optimal a_1 position vs. L_y/L_x}. Fig. \ref{fig: Optimal a_1 position vs. L_y/L_x} also shows that the numerical method suffers from numerical inaccuracies as the optimal $a_1^*$ approaches $0$.

\subsection{Power Gain over Far-Field Solution}\label{Subsection: Result gain}
The power gain $\eta$ in \eqref{eq: Gain summary} is visualised in Fig. \ref{fig: Gain} as a function of $L_y/L_x$, while fixing the parameter ${L_z^\prime}/L_x$ for every plot of the function. The values of the parameters are identical to the ones listed in Subsection \ref{Subsection: Result antennas}.
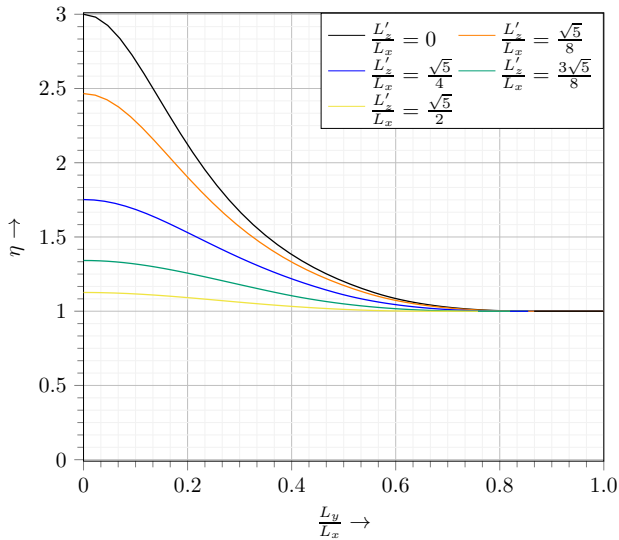
\begin{figure}[t]
    \centering
    \scalebox{0.82}{
\definecolor{black}{RGB}{0,0,0}
\definecolor{bluish-green}{RGB}{0,158,115}
\definecolor{yellow}{RGB}{240, 228, 66}

\begin{tikzpicture}

\begin{axis}[
scale only axis,
tick align=outside,
tick pos=left,
unbounded coords=jump,
xlabel={$\frac{L_y}{L_x} \rightarrow$},
xmin=0.0, xmax=2,
xtick={0,0.4,0.8,1.2,1.6,2.0},
xticklabels={$0$,$0.2$,$0.4$,$0.6$,$0.8$,$1.0$},
ylabel={$\eta \rightarrow$},
ymin=-0.01, ymax=3.01,
grid=both,
grid style={line width=.1pt, draw=gray!10},
major grid style={line width=.2pt,draw=gray!50},
minor tick num=5,
legend cell align=left,
legend style={
    at={(1,1)},
    legend columns=2,
    legend entries={$\frac{L_z^\prime}{L_x} = 0$, $\frac{L_z^\prime}{L_x} =\frac{\sqrt{5}}{8}$, $\frac{L_z^\prime}{L_x} =\frac{\sqrt{5}}{4}$, $\frac{L_z^\prime}{L_x} =\frac{3\sqrt{5}}{8}$, $\frac{L_z^\prime}{L_x} =\frac{\sqrt{5}}{2}$}
    }
]
\coordinate (centre) at (0.68,0.25);
\addlegendimage{black};
\addlegendimage{orange};
\addlegendimage{blue};
\addlegendimage{bluish-green};
\addlegendimage{yellow};

\coordinate (centre2) at (1,1.1);

\addplot [
black,
semithick,
] file[skip first]
{Figures/gain_data/L_z_over_L_x/param0.dat}
coordinate[left, pos=0.64] (0);
\label{eta}

\addplot [
orange,
semithick,
] file[skip first]{Figures/gain_data/L_z_over_L_x/param1.dat}
coordinate[above, pos=0.4] (1);

\addplot [
blue,
semithick,
] file[skip first]{Figures/gain_data/L_z_over_L_x/param2.dat}
coordinate[above, pos=0.4] (2);

\addplot [
bluish-green,
semithick,
] file[skip first]{Figures/gain_data/L_z_over_L_x/param3.dat}
coordinate[above, pos=0.23] (3);

\addplot [
yellow,
semithick,
] file[skip first]
{Figures/gain_data/L_z_over_L_x/param4.dat}
coordinate[above, pos=0.08] (4);

\addplot [
yellow, 
semithick,
domain=1.32456337559024:2
]{(1)};

\addplot [
bluish-green, 
semithick,
domain=1.51637341261138:2
]{(1)};

\addplot [
blue, 
semithick,
domain=1.63970028785408:2
]{(1)};

\addplot [
orange, 
semithick,
domain=1.70943097506076:2
]{(1)};

\addplot [
black, 
semithick,
domain=1.73205080756888:2
]{(1)};



\end{axis}

\end{tikzpicture}
    }
    \caption{Possible power gain $\eta$ \eqref{eq: Gain summary}.}
    \label{fig: Gain}
\end{figure}
The maximum gain is three and is achieved for $L_y / L_x \rightarrow 0$ with ${L_z^\prime} / L_x = 0$. Consequently, using the proposed optimal positioning based on the SWM provides up to three times the amount of received power compared to the far-field approach. As the impact of the spherical nature of the EM wavefronts decreases, the gain over the far-field solution drops.

\section{Conclusion}\label{Section: Conclusion}
In this paper, we considered a MISO WPT system that comprises a two-antenna energy transmitter and a single antenna energy receiver under LoS conditions. Hereby, the objective was the maximisation of the received power for the worst possible receiver position by identifying the optimal transmit antenna deployment. 
The approach in this paper leverages the symmetry among the positions of the transmit antennas. The symmetry is a necessary constraint in order to avoid biasing certain receiver positions over others, as this would lead to a worse objective globally. 
The proposed optimal, analytical solution provides insight into how the geometry impacts the optimal placement of the transmit antennas.
Our solution reveals that distributing the transmit antennas in the environment is optimal when in any location of the environment the PWM is not a suitable approximation of the SWM. Otherwise, a co-located antenna architecture is optimal which corresponds to the optimal solution for the far-field. 
The solution was validated by solving the problem numerically. Furthermore, the maximum power gain over the far-field solution was found to be three.
The extensions of the proposed solution to small-scale fading channels, multiple transmit antennas, and more flexible transmit antenna placements are interesting topics for future work.
\bibliographystyle{ieeetr}
\bibliography{IEEEabrv,refs}

\end{document}